\pdfoutput=1  
\documentclass[11pt,a4paper]{article}

\usepackage[T1]{fontenc}
\usepackage[utf8]{inputenc}
\usepackage[english]{babel}
\usepackage[a4paper,margin=25mm]{geometry}
\usepackage{amsmath,amssymb,amsthm}
\usepackage{graphicx}
\usepackage{tikz}
\usetikzlibrary{arrows.meta,positioning}
\usepackage{booktabs}
\usepackage{multirow}
\usepackage{longtable}
\usepackage{array}
\usepackage{microtype}
\usepackage{xcolor}
\usepackage{xurl}
\usepackage[hidelinks,colorlinks=false]{hyperref}
\hypersetup{
  pdftitle={SHARD: cell-keyed residual splitting for alignment-resistant private dense retrieval},
  pdfauthor={Sergey Kurilenko}
}
\usepackage{enumitem}
\usepackage{caption}
\usepackage{float}

\usepackage{placeins}
\graphicspath{{figs/}}
\newtheorem{proposition}{Proposition}
\newtheorem{hypothesis}{Empirical Hypothesis}

\title{\textbf{SHARD: cell-keyed residual splitting for
alignment-resistant private dense retrieval}}
\author{Sergey Kurilenko\\
\small Moscow Institute of Physics and Technology\\
\small \texttt{sergkurilenko@gmail.com}}
\date{\today}

\begin{document}
\maketitle

\begin{abstract}
\noindent
Dense retrieval systems expose document geometry when vector stores are
compromised, and a global protective transform can often be aligned from
known pairs. We study \textsc{Shard}, which splits PCA coordinates into a
short routing prefix and a residual protected by independent cell-local
orthogonal keys. It supports CKKS ciphertext--plaintext reranking but is
evaluated as a leakage trade-off, not a cryptographic document-privacy
guarantee.

Corrected scoring uses $V^\top(x-\mu)$ for documents and $V^\top q$ for
queries, preserving raw ranking up to a query-dependent constant. Across ten
BEIR/MIRACL configurations it reproduces raw nDCG@10 and recall, whereas
centring both sides loses up to $0.080$ nDCG. Cell keys spread diffuse
known-pair evidence across compartments, but minimum-norm alignment recovers
useful signal far below full key rank, so there is no hard de-anonymisation
threshold.

Real CKKS confirms a maximum score error of $2.29\times10^{-6}$ and no top-1
flips; block packing cuts query upload by $74$--$87\%$ but raises in-process
p50 latency by $14$--$26\%$. In a strengthened GTR case, an unknown key lowers
token-F1 from $.665$ to $.242$; a wide prefix and eight pairs restore much.
Under $25$--$90\%$ release overlap, the unchanged prefix and clean residual
norm link persistent rows with R@1 at least $0.9996$, although cell-Gram
linkage degrades under heavy churn. A formally calibrated Gaussian release
gives nDCG@10 at most $0.011$ at $\varepsilon=1$; its only three strict
utility matches occur at $\varepsilon=32768$ with linkage R@1 at least
$0.995$. \textsc{Shard} therefore preserves retrieval and compartmentalises
alignment evidence, but does not provide DP, unlinkability, or cancellable
templates.
\end{abstract}

\textbf{Keywords:}
dense retrieval, embedding privacy, alignment attacks, cross-release linkage,
homomorphic encryption, CKKS, block-SIMD.

\section{Introduction}

The privacy of vector representations of text---``embeddings''---has become
a first-class concern in production retrieval systems. Dense retrievers
(DPR \cite{ref_dpr}, ColBERT \cite{ref_colbert}) and state-of-the-art
encoders (Sentence-BERT \cite{ref_sbert}, multilingual-E5 \cite{ref_e5},
GTR \cite{ref_gtr}, BGE-M3 \cite{ref_bgem3}) push the per-document
representation through a
$\sim 10^2$-dimensional bottleneck that nevertheless preserves enough
information to reconstruct the original text with high BLEU~\cite{ref_bleu}.
\textsc{Vec2Text} \cite{ref_vec2text} reports
$\mathrm{BLEU} \approx 97.3\%$ on $32$-token inputs when the attacker has
query access to the encoder; \textsc{GEIA}~\cite{ref_geia} and
\textsc{TEIA}~\cite{ref_teia} show that even snapshot attacks
(an adversary with read access to the vector database) recover personally
identifiable information at high rates. For Retrieval-Augmented Generation
pipelines whose vector stores typically contain customer support tickets,
internal corporate documents and personal data, this rule of thumb---``a
leak of the embedding is a leak of the text''---is now well documented
\cite{ref_rag,ref_zeng_rag}.

The defensive landscape is bracketed by two extremes that, individually,
do not satisfy industrial requirements. Differential privacy
\cite{ref_dwork,ref_dpsgd} adds calibrated noise per-record but, in the
naive coordinate-Gaussian formulation on retrieval-quality embeddings, it
faces a difficult operating-point choice: mild noise preserves much of the
nearest-neighbour geometry, while stronger noise destroys ranking quality
long before the privacy budget reaches a level considered useful.
Fully homomorphic encryption \cite{ref_ckks,ref_openfhe}
allows arbitrary computation on ciphertexts, but the
ciphertext-ciphertext (ct-ct) regime is too expensive for top-$k$ search
across $10^6$ documents.

\paragraph{The weak axis of global-linear defences.} Between the two
extremes sits a popular lightweight middle ground: compress the collection
onto a data-dependent SVD subspace, apply a single secret orthogonal
rotation, publish an approximate-nearest-neighbour (ANN) index, and rerank
the encrypted query under CKKS. We first measure this global-linear stack
and confirm that its protection lives on \emph{one weak axis}---the
protected store is a single globally-aligned geometry. Concretely
(Section~\ref{sec:experiments}): a known-plaintext orthogonal Procrustes
attack recovers the secret rotation with about $k=d/2$ anchors, after which
an overlapping reference corpus yields near-exact paragraph recovery
($99.8\%$ top-1); the public product-quantisation index preserves a mean
cosine of $0.95$ and $67\%$ of exact top-10 neighbours; and a truncated
rerank can lose quality on real IR. A corrected scoring audit in this paper
separates two effects that the earlier implementation conflated: centring the
query as well as the documents costs $0.9$--$7.4$ nDCG points, while the
remaining half-PCA truncation gap ranges from approximately $0$ to $5.3$
points across the evaluated BEIR and MIRACL cells. These are
exactly the weaknesses that modern few-shot alignment \cite{ref_algen},
zero-shot inversion \cite{ref_zsinvert} and unsupervised cross-space
translation \cite{ref_vec2vec} attacks exploit: a single global map is, in
the end, alignable.

\paragraph{\textsc{Shard}: an attack-aware geometric transform.} We
introduce \textsc{Shard}, a family of retrieval-preserving protective
transforms designed against the alignment and index-leakage attack surface
rather than against a distortion surrogate such as $\sigma_{rec}$ (which,
as we and others show, is not a privacy metric). The construction
(Section~\ref{sec:shard}) keeps the CKKS/ct-pt machinery of the baseline
but replaces its single global geometry. The construction itself is quick to state. We rotate the
centred embedding and split it into a short \emph{public} prefix $u$---the
top-$d_{\mathrm{pub}}$ PCA directions, which carry coarse stage-1
retrieval---and a \emph{private} residual $r$. The residual is sharded into
$C$ coarse cells defined by $u$, and each cell is turned by its own secret
orthogonal key $H_c$. The experiments sample dense Haar-like keys by Gaussian
QR, so that the server only stores $z_i = H_{c(i)} r_i$. Reranking then happens under CKKS in
the residual space, where the per-cell keys cancel,
$\langle H_c r_q, z_i\rangle=\langle r_q, r_i\rangle$ in plaintext algebra;
CKKS then evaluates the same identity approximately. The single knob $C$
interpolates between a full-dimensional single-residual-key control
($C{=}1$) and per-document micro-keys
($C{=}N$).

The geometric evaluation begins with two constructive findings. First, the
corrected full-dimensional split score preserves raw
ranking on all ten BEIR/MIRACL encoder--dataset cells; in the plaintext
experiment, the remaining measured utility gap comes from stage-1 shortlist
recall. Second, cell keys compartmentalise
known-pair evidence. A diffuse leak gives each cell fewer anchors, so the
recoverable anchor-span rank and the resulting cosine/re-identification curve
grow more slowly as $C$ increases. This effect is continuous, however:
minimum-norm linear alignment extracts useful signal far below full rank, so
there is no $d_{\mathrm{priv}}$ privacy barrier and no claim that subthreshold
records remain unidentified.

The limiting findings are just as important. The short prefix exposes less global neighbour structure than the
baseline's public half-space, but the server still sees a meaningful coarse
index and exact within-cell Gram geometry. Independent re-keying therefore
does not make snapshots unlinkable. At $N=10{,}000$, a row-wise cross-key
cosine test is near chance, yet norm assignment, Gram signatures and the
unchanged prefix link almost every persistent record. Insert/delete churn
weakens Gram matching at low overlap, but prefix and clean residual-norm R@1
remain above $0.9996$; micro-keys remove shared residual edges without closing
those two channels.

The systems and privacy baselines add two less convenient but important
findings. Real CKKS/block-SIMD execution preserves ranking to numerical
precision and cuts query upload by $74$--$87\%$, yet increases median latency
in the tested TenSEAL layout because every candidate still returns a separate
ciphertext. A formally calibrated Gaussian release, meanwhile, has no
evaluated point that combines SHARD-matched utility with resistance to clean
native-gallery linkage. In a strengthened GTR/Vec2Text outcome test, hiding
the cell key lowers mean token-F1 from $.665$ to $.242$, but the wide public
prefix and eight known pairs restore much of that loss. These results do not
turn SHARD into DP or into a
cryptographic document store; they make the boundary of the contribution
measurable.

\noindent
The global-linear baseline characterisation (alignment, PQ leakage,
BEIR-truncation loss, and the calibrated-noise diagnostic that
$\sigma_{rec}$ is not a privacy metric) is retained in
Section~\ref{sec:experiments} as the foil that motivates \textsc{Shard},
together with the CKKS ct-pt path that both schemes share. The earlier short
paper introduced a random-projection hybrid~\cite{ref_vkit}; the subsequent
global-linear study replaced that projection with the disclosed SVD/PQ
geometry and supplied the reproducible CKKS parameter selection and leakage
audit~\cite{ref_alpha}. \textsc{Shard} is the attack-aware successor that
replaces the single global secret axis.

\paragraph{Contributions relative to prior work.} The global-linear
baseline---SVD truncation, a secret rotation, a public ANN index and ct-pt
CKKS reranking---together with the reproducible CKKS parameter selection are
\emph{not} new here: the exact comparison system is the prior global-linear
study~\cite{ref_alpha}, preceded by the random-projection prototype in
\cite{ref_vkit}. We reproduce it only as the comparison setting. Relative to
that prior work, this
paper begins with the \textsc{Shard} cell-keyed residual split and its
rank-preserving scoring correction (Section~\ref{sec:shard}). It then follows
the construction through the failure modes that determine its honest scope:
anchor-span recovery below full key rank (Section~\ref{ssec:shard-align});
public-prefix, norm and within-cell leakage, including the micro-key limit and
release churn (Sections~\ref{ssec:shard-leak} and~\ref{ssec:shard-micro}); and
an overlap-reference lookup (Section~\ref{ssec:shard-ref}). The empirical case
is completed by a two-encoder, ten-cell BEIR/MIRACL utility audit, an actual
TenSEAL block-SIMD implementation, a formally accounted Gaussian-release
baseline, and a learned text-level outcome. Everything else is reproduced
baseline context, condensed and attributed accordingly.

\paragraph{Hypotheses.} The study is organised around four testable
hypotheses. \textbf{(H1)} Under diffuse disclosure, increasing $C$ lowers
the in-cell anchor rank and shifts partial-alignment recovery curves, without
creating a hard full-rank threshold. \textbf{(H2)} The corrected
full-dimensional split score preserves raw ranking, while half-PCA truncation
may still lose graded retrieval quality. \textbf{(H3)} In the measured
settings, the short prefix exposes less exact-neighbour structure than a
public half-space PQ index, while shortlist recall is evaluated separately.
\textbf{(H4)} Pairwise cross-key cosine is an
insufficient unlinkability test because prefix, norm and Gram invariants can
link re-keyed snapshots. The experiments of
Section~\ref{sec:experiments} test each in turn. These hypotheses concern the
geometric construction; the CKKS, formal-DP and churn studies are
complementary implementation and robustness audits.

\paragraph{Security-claim discipline.} \textsc{Shard} is an attack-aware
\emph{geometric} defence, not a cryptographic privacy guarantee, and every
claim below should be read as holding \emph{in the measured setting and
against the studied attacker families}. The earlier baseline prototype uses
CKKS to protect numerical query values and returned scores. Here SHARD also
reuses that path in an actual TenSEAL implementation, so numerical error,
serialization volume and phase latency are measured rather than inferred.
This is still a single-process workstation benchmark, not a networked-service
claim. The document side is geometrically transformed, not proven secure, and
access patterns together with an overlapping public reference corpus remain
exposed channels that we measure rather than close.

\paragraph{Notation.} To avoid the overloading of the symbol $N$, we use
$N_{\mathrm{poly}}$ for the CKKS polynomial-modulus degree,
$N_{\mathrm{docs}}$ for the corpus size, and $K_{\mathrm{cands}}$ for the
stage-1 short-list length throughout.

\paragraph{An elementary proxy bound.} We state a one-line projection
bound (Proposition~\ref{lemma:proj-l2}): any decoder whose image is
constrained to $\mathrm{span}(V_k)$ has $L_2$ reconstruction error at
least $\|\mathbf{x}_\perp\|_2$. This is a Pythagorean fact, not a security
theorem; we use it only to make $\sigma_{rec}$ a precise distortion proxy
within that decoder class. Its empirical relation to the BLEU of an
off-the-shelf \textsc{Vec2Text} attack is reported as Empirical
Hypothesis~\ref{hyp:bleu}.

\paragraph{Roadmap.} Section~\ref{sec:related} reviews related
literature; Section~\ref{sec:threat} states the threat model;
Section~\ref{sec:method} describes the global-linear baseline (the foil)
and the shared CKKS/ct-pt machinery;
Section~\ref{sec:lemma} an elementary projection bound;
Section~\ref{sec:shard} introduces \textsc{Shard};
Section~\ref{sec:experiments} reports the baseline characterisation and the
expanded \textsc{Shard} evaluation: corrected utility, partial alignment,
public-index leakage, measured CKKS, cross-release linkage with churn, a
formal Gaussian release, learned text inversion, and reference lookup;
Section~\ref{sec:discussion} discusses limitations and open work.
Appendix~\ref{app:baseline} condenses the baseline inversion analyses
that overlap with the global-linear paper~\cite{ref_alpha}.

\section{Related Work}\label{sec:related}

\paragraph{Embedding inversion.} \textsc{Vec2Text} \cite{ref_vec2text}
trains an encoder--decoder model that iteratively rewrites a candidate
hypothesis until its embedding matches a target vector; on
$32$-token \textsc{MS-MARCO} inputs the attack achieves
$\mathrm{BLEU} \approx 97.3\%$ and $\approx 92\%$ exact recovery.
\textsc{GEIA} \cite{ref_geia} instead trains a generative decoder that maps
a target embedding to a prompt and reconstructs the whole sentence in a
single pass. \textsc{TEIA} \cite{ref_teia} removes the need to query the
victim encoder, training a surrogate inverter that transfers across
embedding models. Earlier work of Song and
Raghunathan~\cite{ref_song_leak} already established that embedding models
leak substantial information about their inputs. Adjacent work on
membership-inference \cite{ref_shokri} and on extracting training
texts from large language models \cite{ref_carlini} confirms that the
risk surface of dense retrievers is broader than the embedding-inversion
literature alone suggests. The 2025 generation of attacks further
weakens any defence that relies only on an unknown embedding orientation:
\textsc{ALGEN} aligns a victim embedding space to an attack space with
few known pairs and then generates text \cite{ref_algen};
\textsc{ZSInvert} removes the need for an embedding-specific inversion
model by using a universal zero-shot decoding strategy
\cite{ref_zsinvert}; and \textsc{LAGO} extends few-shot inversion to
cross-lingual settings by coupling alignments across related languages
\cite{ref_lago}. These papers motivate the alignment experiment in
Section~\ref{ssec:alignment-pq}: if a secret rotation can be absorbed by
a small alignment matrix, it must not be described as a security
guarantee.

\paragraph{Encrypted search and leakage.} A long cryptographic line studies
search over encrypted data and the leakage it admits by construction.
Searchable symmetric encryption formalised the efficiency/leakage
trade-off~\cite{ref_sse}, and structured encryption generalised it to
encrypted data structures with controlled disclosure~\cite{ref_structenc}.
A parallel line of leakage-abuse cryptanalysis shows that access-pattern and
co-occurrence leakage alone can reconstruct queries and
data~\cite{ref_leakageabuse,ref_accesspattern}. We position \textsc{Shard}
within this framing rather than against it: CKKS reranking hides the value
channel, but the public stage-1 index and the per-query access pattern are
leakage that remains \emph{by construction}, which is why we measure the
public-index neighbour leakage explicitly and flag access patterns as an
open channel for composition with a PIR/ORAM-style primitive. A recent
reproducibility study of \textsc{Vec2Text}~\cite{ref_v2t_repro} corroborates
that inversion is strong under ideal conditions yet sensitive to sequence
length and to simple quantisation/noise, consistent with our weak-attacker
calibration.

\paragraph{Lightweight defences and transformed representations.}
Recent defences increasingly treat embeddings as representations whose
geometry can be compressed, transformed or regularised rather than merely
noised. \textsc{EGuard} uses a projection network with mutual-information
regularisation to reduce inversion leakage \cite{ref_eguard};
\textsc{SPARSE} studies concept-aware privacy mechanisms and anisotropic
noise for embedding inversion resistance \cite{ref_sparse}; and
nonparametric variational DP work adds bottleneck/noise mechanisms at
the representation level \cite{ref_nvdp}. Matryoshka-style representation
learning shows that retrieval utility can be deliberately concentrated in
prefix subspaces \cite{ref_mrl}, while recent cross-space translation
work argues that embedding spaces share enough geometry for transfer
maps to be useful \cite{ref_vec2vec}. \textsc{Shard} builds on both
observations: it uses a Matryoshka-style prefix as a deliberately coarse
public channel, precisely because the cross-space-translation literature
shows that a single global geometry is alignable.

\paragraph{Positioning against the 2025 defense and attack lines.} Two recent
defenses target different channels than \textsc{Shard}.
\textsc{STEER}~\cite{ref_steer} is \emph{query-side}: the client learns a
one-off transform that maps its query into the server's embedding space, so
the server cannot invert the query---but it does not protect the stored
documents and its cross-model alignment is approximate.
\textsc{EntroGuard}~\cite{ref_entroguard} (like \textsc{EGuard}) is a
\emph{perturbation} defense: it raises the entropy of inverted text under a
distortion bound, protecting embeddings against trained inversion but at a
lossy, non-exact reranking cost. On the attack side, \textsc{ZSInvert} and
the universal geometry of \textsc{vec2vec} make a leaked store
inversion-prone without victim-specific training---precisely the alignment
surface \textsc{Shard} hardens. Table~\ref{tab:defense-compare} places these
lines on common axes. \textsc{Shard} combines a rank-correct geometric split
with a CKKS-compatible query path, but its document view is linkable through
the prefix, norms and cell geometry. We therefore avoid an ``only'' or
end-to-end privacy claim.

\begin{table}[tbp]
\centering
\caption{Positioning against recent privacy-preserving retrieval defenses.
$\checkmark$ provided, $\times$ not addressed, $\sim$ partial/approximate.
Rank fidelity refers to plaintext scoring; CKKS remains approximate.}
\label{tab:defense-compare}
\scriptsize
\setlength{\tabcolsep}{2.2pt}
\begin{tabular}{@{}lccccl@{}}
\toprule
Defense & Query priv. & Doc.\ priv. & Rank fidelity & Limited leak & Attacker / cost \\ \midrule
\textbf{\textsc{Shard}} (ours) & $\sim$ CKKS path & $\sim$ keyed & $\checkmark$ plain. & $\times$ & prefix/norm/Gram; active cells \\
\textsc{STEER}~\cite{ref_steer} & $\checkmark$ align. & $\times$ & $\sim$ approx. & --- & query inversion; one-off \\
\textsc{EntroGuard}~\cite{ref_entroguard} & $\times$ & $\sim$ perturb. & $\times$ & $\sim$ & trained EIA; low \\
\textsc{EGuard}~\cite{ref_eguard} & $\times$ & $\sim$ learned & $\times$ & $\sim$ & trained EIA; low \\
PQ-only / quantization & $\times$ & $\sim$ lossy & $\times$ & $\sim$ & reconstruction; low \\
legacy uncalibrated Gaussian & $\times$ & $\sim$ noise & $\times$ & $\times$ & distortion diagnostic; low \\
analytic Gaussian release~\cite{ref_analytic_gaussian} & $\times$ & $\checkmark$ DP & $\times$ & $\sim$ & formal content DP; high utility cost \\
\bottomrule
\end{tabular}
\end{table}

\paragraph{Template criteria and decoupled encoders.}
Cancellable biometrics \cite{ref_cancellable} require renewability and
unlinkability, criteria that orthogonal re-keying alone does not satisfy here:
Experiments~25 and~28 link re-keyed records through invariant features, both
with and without insert/delete churn. Work on
retrieval with asymmetric
document/query encoders \cite{ref_dpr,ref_colbert} shows that the
document and query sides need not share one geometry---the premise that lets
\textsc{Shard} transform the stored side and the scoring query consistently.
Our contribution is the resulting compartmentalisation and, equally, the
measurement of where it fails the stronger template criteria.

\paragraph{Differential privacy for embeddings.} Lyu et al.~\cite{ref_lyu}
investigate per-coordinate Gaussian noise on dense representations. Our
formal replacement-adjacency baseline uses analytic Gaussian calibration and
makes the operating-point tension concrete: at $\varepsilon=1$, nDCG@10 is at
most $0.011$, while every strict match to corrected SHARD utility occurs at
$\varepsilon=32768$ with native-gallery linkage R@1 at least $0.995$.
Random-projection-based privacy
\cite{ref_jl,ref_liu,ref_kenthapadi} predates modern dense retrieval; the
data-dependent SVD projection used here is closer to a noise-removal
technique than to a privacy-preserving DP mechanism, although
$\sigma_{rec}$ provides a useful proxy quantity.

\paragraph{Homomorphic encryption.} The CKKS scheme~\cite{ref_ckks}
introduced approximate-arithmetic FHE; OpenFHE \cite{ref_openfhe} and
TenSEAL~\cite{ref_tenseal} provide production-grade implementations.
The bootstrapping cost \cite{ref_ckks_bootstrap} pushes practitioners
to leveled circuits; specialised compilers \cite{ref_eva,ref_chet}
optimise scale and modulus chains.
GPU acceleration of CKKS bootstrap \cite{ref_jung_gpu} and Intel HEXL
\cite{ref_hexl} have made larger leveled circuits feasible.

\paragraph{Hybrid privacy-preserving retrieval.} Tiptoe \cite{ref_tiptoe}
combines clustering with PIR for query privacy. SealPIR
\cite{ref_sealpir}, OnionPIR \cite{ref_onionpir} and SimplePIR
\cite{ref_simplepir} provide single-server PIR with practical query sizes.
The architecture in this paper is complementary: ct-pt CKKS reranking
takes care of the value side of the leakage, while access patterns
remain a separate problem for which PIR/ORAM-like primitives are an
appropriate composition.

\section{Threat Model}\label{sec:threat}

The global-linear baseline and \textsc{Shard} share one threat model, which
we fix here. It is asymmetric---a \emph{trusted client} (the data owner), an
\emph{honest-but-curious server}, and a \emph{non-adaptive} baseline
attacker, strengthened by the \emph{known-plaintext} attacker we use
throughout the alignment analysis. This is a fat-client, data-owner setting:
the client runs the encoder, holds all keys, performs stage-1 retrieval
locally, and the server only reranks under CKKS---a scoping assumption, not
the only possible deployment. Tables~\ref{tab:threat} and~\ref{tab:claims}
map the \emph{baseline}'s coverage and claims (the foil);
Table~\ref{tab:shard-claims} does the same for \textsc{Shard} (the
contribution, Section~\ref{sec:shard}).

The setting has three parties. The \emph{client} is the data owner: it runs
the encoder $E$, generates the secret keys $\mu, V_k, R, sk_{\mathrm{CKKS}}$
and never shares them, so it is trusted both in software and in key
management. The \emph{server} runs the search service over the rotated
database $E_{\mathrm{rot}} \in \mathbb{R}^{N_{\mathrm{docs}} \times k}$ and the
public PQ artefact; it is honest-but-curious, following the protocol
faithfully while trying to read textual information out of everything it
stores and every query it processes. The \emph{attacker}, finally, is at
minimum non-adaptive---an off-the-shelf \textsc{Vec2Text} model with no
knowledge of $R$---but against both schemes we grant it considerably more: a
known-plaintext alignment observer (diffuse and targeted), public-index
leakage, reference-corpus lookup, rank-deficient Procrustes, minimum-norm and
regularised linear maps, and cross-release matching from prefix, norm and
Gram invariants (Sections~\ref{ssec:shard-align} and
\ref{ssec:shard-micro}). Malicious-server behaviour remains out of scope;
the learned text-inversion study is reported separately and does not turn the
geometric tests into a universal adaptive-attacker guarantee.

\paragraph{Three distinct privacy notions.} A recurring source of
confusion in hybrid designs is the conflation of \emph{what} is
protected. We therefore separate three independent notions and tie each to the
component responsible for it. \emph{Document privacy} is the weakest: the
plaintext rotated database $E_{\mathrm{rot}}$ is visible to the server,
protected only by SVD truncation (through the proxy quantity $\sigma_{rec}$)
and the secret rotation $R$, an empirical obfuscation layer---this is
\emph{not} a cryptographic protection, and it fails under a known $R$ and, as
Section~\ref{ssec:alignment-pq} shows, under even a modest known-plaintext
alignment attack. \emph{Query-value confidentiality} comes from CKKS: the
numerical scoring vector and returned scores are encrypted, and our measured
SHARD path keeps the secret key client-side. It still reveals active cells,
ciphertext counts and candidate identifiers. \emph{Access-pattern
privacy} is again absent---the candidate IDs reranked per query, and the
public PQ codes, are not hidden, since CKKS leaves that channel untouched and
closing it would need composition with a PIR/ORAM-style primitive.

In particular, the public PQ artefact (codebook + per-document codes,
trained in the rotated space) is itself a lossy compressed view of
$E_{\mathrm{rot}}$. Section~\ref{ssec:alignment-pq} quantifies this
leakage on a controlled sample; the result is strong enough that public
PQ codes must be treated as a meaningful side channel rather than as a
harmless index artefact.

Table~\ref{tab:threat} summarises the coverage and
Table~\ref{tab:claims} maps each claim to the evidence that supports it.
The diagonal of the coverage table is intentional: ct-pt reranking takes
care of the value channel, but the access-pattern channel (which document
IDs are reranked) remains open, and the public-PQ-artefact channel is
measured as leaky rather than closed. They would require composition with
a PIR-style primitive and a stronger static-side index protection,
respectively.

\begin{table}[tbp]
\centering
\caption{Threat coverage of the global-linear \emph{baseline} (the foil).
\textsc{Shard}'s coverage of the same vectors is in
Table~\ref{tab:shard-claims}.}\label{tab:threat}
\small
\begin{tabular}{@{}p{0.42\linewidth}p{0.10\linewidth}p{0.42\linewidth}@{}}
\toprule
\textbf{Attack vector} & \textbf{Status} & \textbf{Comment} \\ \midrule
Snapshot of $E_{\mathrm{rot}}$ without $R$ (off-the-shelf Vec2Text) &
Partial &
Heuristic; empirically validated against non-adaptive Vec2Text. \\
Network interception, leakage of similarity scores &
Yes &
CKKS at tc128; $sk$ never leaves the client. \\
Known $R$ (known-rotation attacker) &
No &
Reduces to $\sigma_{rec}$ from SVD truncation only. \\
Known-plaintext pairs $(\mathrm{text}_i, E_{\mathrm{rot},i})$ &
No &
Evaluated in Section~\ref{ssec:alignment-pq}; roughly $k$ pairs
recover the rotation numerically. \\
Reference-corpus overlap after known-plaintext alignment &
Partial &
Evaluated in Section~\ref{ssec:reference-attack}; overlapping texts are
recoverable. \\
Unsupervised cross-space translation attacks &
No &
Not evaluated on the baseline; a covariance control is not treated as a full
translation attack. \\
Adaptive / learned alignment decoder &
Partial &
Experiment~29 applies the official pretrained GTR Vec2Text corrector to raw,
SHARD and partially aligned views. A SHARD-specific trained or universal
decoder remains unevaluated (Section~\ref{ssec:shard-text}). \\
Malicious server (protocol deviation, substitution) &
No &
Verifiable computation / TEEs are an orthogonal addition. \\
Access-pattern leakage on candidate IDs &
No &
CKKS hides values, not access patterns; requires PIR/ORAM. \\
Leakage from the public PQ artefact (rotated-space codes) &
Partial &
Evaluated in Section~\ref{ssec:alignment-pq}; neighbour structure is
substantially preserved, so this remains an exposed channel. \\
\bottomrule
\end{tabular}
\end{table}

\begin{table}[tbp]
\centering
\caption{\emph{Baseline} claims vs.\ evidence (the foil): which statements
about the global-linear stack are demonstrated and which are deliberately
\emph{not} claimed. \textsc{Shard}'s claims are in
Table~\ref{tab:shard-claims}.}
\label{tab:claims}
\small
\begin{tabular}{@{}p{0.40\linewidth}p{0.54\linewidth}@{}}
\toprule
\textbf{Claim} & \textbf{Evidence / status} \\ \midrule
CKKS hides query values and scores from the server &
Cryptographic construction; parameters from~\cite{ref_alpha}
(Sec.~\ref{ssec:crypto-detail}); query privacy is cryptographic. \\
ct-pt is faster than ct-ct for the reranking &
Established in~\cite{ref_alpha} ($1.7\times$ at the auto-tuned
configuration); reused unchanged. \\
Secret rotation helps against \emph{off-the-shelf}, non-adaptive
Vec2Text & Appendix~\ref{ssec:vec2text}; effect holds only for the
tested weak-attacker configuration. \\
Aligned off-the-shelf Vec2Text recovers text/PII after Procrustes &
Appendix~\ref{ssec:adaptive-inversion}: \textbf{not observed} in the
memory-capped RTX~4090 run, but the raw baseline is also near the
reconstruction floor, so this is not a proof of security. \\
Secret rotation is robust to known-plaintext alignment &
\textbf{False}; Section~\ref{ssec:alignment-pq} shows near-exact
recovery with about $k$ known pairs. \\
Public PQ codes are harmless metadata &
\textbf{False}; Section~\ref{ssec:alignment-pq} shows substantial
reconstruction and nearest-neighbour leakage. \\
The protective wrapper preserves ranking in $\mathrm{span}(V_k)$ &
Metric-preserving, Pearson $>0.9999$~\cite{ref_alpha} (recap Sec.~\ref{ssec:projection-baselines}). \\
SVD truncation improves \emph{Acc@1} (a ``denoiser'') &
\textbf{Not significant}~\cite{ref_alpha}: paired tests give
$p\geq0.11$ for every encoder. \\
SVD truncation significantly improves \emph{Acc@10} for $d{=}1024$
retrieval-trained encoders &
\textbf{True}~\cite{ref_alpha} (e5-large $+0.042$, $p<10^{-6}$); it
significantly \emph{degrades} compact/paraphrase encoders. \\
The denoiser generalises to real IR &
\textbf{False}~\cite{ref_alpha}: SVD significantly \emph{reduces}
nDCG@10 on BEIR SciFact/NFCorpus for every encoder. \\
$\sigma_{rec}$ alone measures document privacy &
\textbf{Not claimed}; Appendix~\ref{app:noise} shows that
matched-distortion noise can preserve substantially more raw neighbour
structure. \\
End-to-end query latency $<1$~s at $10^6$ docs &
\textbf{True}~\cite{ref_alpha}; recap Sec.~\ref{ssec:latency} (loopback PoC). \\
The system is secure against an \emph{adaptive} inversion attacker &
\textbf{Not claimed}; learned/universal adaptive decoders remain a
required experiment (Sec.~\ref{sec:limits}). \\
The system is secure against reference-corpus lookup &
\textbf{False when the reference overlaps}; Section~\ref{ssec:reference-attack}
shows exact paragraph recovery after alignment. \\
The system is secure against unsupervised cross-space transfer attacks &
\textbf{Not claimed}; these attacks remain outside the evaluated threat
model. \\
The rotation is a cryptographic guarantee &
\textbf{Not claimed}; it is an empirical obfuscation layer only. \\
Document privacy on the server is cryptographic &
\textbf{Not claimed}; $E_{\mathrm{rot}}$ is plaintext on the server. \\
No information leaks from the public PQ codes &
\textbf{Not claimed}; the measured leakage is non-negligible. \\
\bottomrule
\end{tabular}
\end{table}

\begin{table}[tbp]
\centering
\caption{\textsc{Shard} (the contribution) claims vs.\ evidence.}
\label{tab:shard-claims}
\footnotesize
\begin{tabular}{@{}p{0.42\linewidth}p{0.52\linewidth}@{}}
\toprule
\textbf{Claim} & \textbf{Evidence / status} \\ \midrule
Corrected full-dimensional scoring preserves raw ranking &
\textbf{Supported algebraically and empirically}; Experiment~23 obtains an
identical order in every sampled float64 check across ten BEIR/MIRACL cells.
End-to-end quality still depends on shortlist recall
(Section~\ref{ssec:shard-utility}). \\
Half-PCA alone causes the formerly reported $2$--$8$ point loss &
\textbf{False}; query centring caused part of it. Corrected truncation gaps
range from approximately $0$ to $5.3$ nDCG points. \\
Cell keys create a hard $d_{\mathrm{priv}}$ recovery threshold &
\textbf{False}; minimum-norm OLS exceeds $0.9$ residual-gallery R@1 with
about $32$--$36$ in-cell pairs while the full key remains underdetermined
(Section~\ref{ssec:shard-align}). \\
An unknown cell key suppresses off-the-shelf text reconstruction &
\textbf{Supported only in the strengthened GTR case}; mean token-F1 falls
from $.665$ to $.242$ without anchors, but the public prefix and eight-pair OLS
restore substantial output. This is not a general inversion guarantee
(Section~\ref{ssec:shard-text}). \\
Increasing $C$ compartmentalises diffuse known pairs &
\textbf{Supported in the measured setting}; the global OLS budget for
R@1$\ge0.9$ shifts from $32$ to $8\,192$ over $C=1$ to $256$. The associated
active-cell and packed-CKKS costs are measured separately; this is not a
formal security threshold. \\
The public prefix leaks less than the baseline's index &
Section~\ref{ssec:shard-leak}; NN-overlap@10 $0.20$--$0.55$ vs.\ $0.76$. \\
Micro-keys remove shared residual edges and are unlinkable &
The first clause is supported; \textbf{unlinkability is false}. Norm-rank and
the unchanged prefix link persistent rows almost perfectly even with
insert/delete churn (Section~\ref{ssec:shard-micro}). \\
The earlier Gaussian perturbation was a DP baseline &
\textbf{False}; it lacked adjacency, clipping and accounting. Experiment~27
replaces it with an analytically calibrated $(\varepsilon,\delta)$ Gaussian
release under explicit replacement adjacency
(Section~\ref{ssec:shard-vs-dp}). \\
SHARD's in-process cryptographic path is measured &
\textbf{Supported for the local measured path}; 315 real TenSEAL/SEAL runs
cover encryption, serialization, ct--pt evaluation and decryption. Network
RTT, concurrency and packed multi-score responses remain unmeasured
(Section~\ref{ssec:shard-ckks}). \\
\textsc{Shard} itself is differentially private &
\textbf{False}; Experiment~27 is a separately evaluated Gaussian-release
baseline. No $(\varepsilon,\delta)$ guarantee is claimed for SHARD. \\
\textsc{Shard} hides the neighbour graph &
\textbf{Not claimed}; same-cell similarities leak
(Section~\ref{ssec:shard-leak}). \\
\textsc{Shard} stops an overlapping reference lookup &
\textbf{False}; the public prefix still de-anonymises
(Section~\ref{ssec:shard-ref}). \\
\textsc{Shard} is a cryptographic guarantee &
\textbf{Not claimed}; it is a geometric compartmentalisation mechanism with
explicit leakage. \\
\bottomrule
\end{tabular}
\end{table}

\section{Baseline Construction (the global-linear foil)}\label{sec:method}

This section describes the global-linear \emph{baseline} that
\textsc{Shard} (Section~\ref{sec:shard}) replaces---SVD truncation, a single
secret rotation, a public PQ index, and CKKS reranking. The construction is
\emph{reproduced} from our prior global-linear work~\cite{ref_alpha} and condensed
to the elements the \textsc{Shard} comparison needs; we claim no novelty for
it here and refer the reader there for the full derivation. It is the foil
against which the contribution is measured; \textsc{Shard} reuses its CKKS
ct-pt machinery (Section~\ref{ssec:online}) but
discards its single global geometry. The baseline is illustrated in
Figure~\ref{fig:architecture}: an offline data-preparation phase
(Section~\ref{ssec:offline}) and the online query protocol
(Section~\ref{ssec:online}).

\begin{figure}[tbp]
\centering
\begin{tikzpicture}[
  font=\small,
  box/.style={draw=black!55, rounded corners=2pt, align=center, font=\footnotesize,
              minimum height=9mm, text width=22mm, fill=blue!5, inner sep=2pt},
  >={Stealth[length=2mm]}, line width=0.5pt]
  \node[box] (enc) {1.~Client\\encoder};
  \node[box, right=6mm of enc] (tr) {2.~Secret transform\\$\mu, V_k, R$};
  \node[box, right=6mm of tr] (pq) {3.~Local PQ\\shortlist};
  \node[box, right=6mm of pq] (ck) {4.~CKKS encrypt\\rotated query};
  \node[box, below=12mm of ck, fill=red!6] (srv) {5.~Server\\ct-pt rerank};
  \node[box, left=6mm of srv] (dec) {6.~Decrypt\\scores};
  \node[box, left=6mm of dec] (top) {7.~Top-$k$\\results};
  \draw[->] (enc) -- (tr);
  \draw[->] (tr) -- (pq);
  \draw[->] (pq) -- (ck);
  \draw[->] (ck) -- (srv);
  \draw[->] (srv) -- (dec);
  \draw[->] (dec) -- (top);
  \node[font=\footnotesize\itshape, text=black!70, anchor=south west]
    at ([yshift=5mm]enc.north west) {Offline: protected database $E_{\mathrm{rot}}$ + public PQ artifact};
  \node[font=\footnotesize\itshape, text=black!70, anchor=north west]
    at ([yshift=-5mm]top.south west) {Online: CKKS hides query values \& scores; access pattern stays visible};
\end{tikzpicture}
\caption{Online query flow of the global-linear \emph{baseline}.
Steps 1--7 realise the two-stage protocol: client transformation
$T(\cdot)$ and local PQ filtering produce a short-list of
$K_{\mathrm{cands}}$ candidate IDs; step~4 encrypts the rotated query
under CKKS; step~5
runs ct-pt reranking on the server; steps~6--7 decrypt the scores and
sort. The secret keys ($\mu$, $V_k$, $R$, $sk_{\mathrm{CKKS}}$) and the
rotated database $E_{\mathrm{rot}}$ are produced offline by the data
owner (Section~\ref{ssec:offline}).}
\label{fig:architecture}
\end{figure}

\subsection{Offline phase}\label{ssec:offline}

\paragraph{Encoder and centring.} The data owner runs the encoder $E$ on a
corpus $D = \{T_1, \dots, T_{N_{\mathrm{docs}}}\}$ to obtain
$X \in \mathbb{R}^{N_{\mathrm{docs}} \times d}$ with $L_2$-normalised rows.
The global centroid $\mu \in \mathbb{R}^d$ is computed and the corpus is
centred: $X_{\mathrm{c}} = X - \mu$.

\paragraph{SVD truncation.} A randomised SVD~\cite{ref_halko,ref_eckart}
$X_{\mathrm{c}} \approx U_k \Sigma_k V_k^\top$ produces the orthonormal
matrix $V_k \in \mathbb{R}^{d \times k}$ that spans the dominant
$k$-dimensional subspace. We define the corpus-level relative
reconstruction error
\[
\sigma_{rec}^{2}(E; V_k) =
\frac{\|E - \pi_k(E)\|_F^2}{\|E\|_F^2} = 1 - \eta_k ,
\]
where $\eta_k$ is the fraction of squared Frobenius energy retained.
Because $\sigma_{rec}$ \emph{decreases} as $k$ grows, the operating point
$k = d/2$ is the \emph{largest} truncation dimension (rounded to $d/2$) at
which the distortion floor $\sigma_{rec} \geq 0.10$ still holds uniformly
across all five tested encoders (the binding cases are e5-large,
$\sigma_{rec}=0.101$, and mpnet, $\sigma_{rec}=0.107$); larger $k$ would
drop some encoders below the floor. Thus $k=d/2$ maximises retained
utility subject to the distortion-proxy constraint. We caution that the
threshold $0.10$ is an engineering proxy (Section~\ref{sec:lemma}), not a
privacy guarantee, and is close to binding for two encoders.

\paragraph{Secret rotation.} A Haar-uniform random orthogonal matrix
$R \in O(k)$ is generated by QR-decomposing a Gaussian matrix and
correcting the diagonal of $R$ to obtain a uniform distribution on the
group. The protected document representation is
\[
v'_i \;=\; T(v_i) \;=\; R\, V_k^\top (v_i - \mu) \in \mathbb{R}^k .
\]

\paragraph{Public PQ artefact.} A faiss product-quantisation
\texttt{IndexPQ} index~\cite{ref_pq} is trained
\emph{in the rotated space} $E_{\mathrm{rot}}$ with $M = k/4$ subquantisers
and $8$ bits per code. Both the codebook and the per-document codes are
public; the client downloads them once at on-boarding. Training the PQ
artefact in the rotated space removes one trivial cross-space pairing
(an attacker cannot compose pairs
$(\hat E_{\mathrm{proj}}, E_{\mathrm{rot}})$ from a non-rotated PQ index
to estimate $R$). It does \emph{not}, however, make the PQ codes safe:
the public codes are a lossy quantised image of $E_{\mathrm{rot}}$ and
may themselves leak neighbourhood, cluster or topic structure. We treat
the PQ artefact as part of the attack surface and quantify
nearest-neighbour and reconstruction leakage in
Section~\ref{ssec:alignment-pq}.

\paragraph{Client-side state.} The client retains $\mu$, $V_k$, $R$ and
the CKKS secret key $sk_{\mathrm{CKKS}}$. The server stores
$E_{\mathrm{rot}} \in \mathbb{R}^{N_{\mathrm{docs}} \times k}$ in plaintext
and the public PQ artefact.

\subsection{Online query protocol}\label{ssec:online}

The pipeline of Figure~\ref{fig:architecture} processes a query in stages, of
which only the last leaves the client. The client encodes the query,
$v_q = E(q_{\mathrm{text}}) \in \mathbb{R}^d$, and forms two projected
coordinates: the centred router
$\widetilde q=R V_k^\top(v_q-\mu)$ and the rank-correct scoring query
$q_{\mathrm{score}}=R V_k^\top v_q$. Using the public PQ artefact and
$\widetilde q$ it runs an asymmetric PQ-distance search locally and
keeps the top-$K_{\mathrm{cands}}$ candidate IDs $I_{\mathrm{cand}}$---a
short-list small enough ($K_{\mathrm{cands}} = 40$ in our experiments) that the
CKKS reranking still fits the latency budget. It encrypts the scoring query,
$\mathrm{ct}_q = \mathrm{Enc}_{pk}(q_{\mathrm{score}})$, and sends it on; for each
$j \in I_{\mathrm{cand}}$ the server computes
$\mathrm{ct}_{\mathrm{score}}^{(j)} = \mathrm{ct}_q \odot v'_j$ in ct-pt mode,
a ciphertext that encodes
\[
\langle q_{\mathrm{score}},v'_j\rangle
=v_q^\top V_kV_k^\top(v_j-\mu),
\]
the projected raw score up to a document-independent query offset. The client finally decrypts those scores,
$s_j = \mathrm{Dec}_{sk}(\mathrm{ct}_{\mathrm{score}}^{(j)})$, sorts them, and
returns the top-$K$ documents---so the server reranks without ever seeing a
plaintext query value or score.

\paragraph{Why ct-pt is sufficient.} In step~5 the server multiplies a
ciphertext by a plaintext vector. Unlike ciphertext-ciphertext
multiplication, this operation produces a ciphertext that stays in the
two-component form, no relinearisation is required, and the modulus chain
is consumed by exactly one rescale. The sum of slots into a single
scalar $\langle q_{\mathrm{score}}, v'_j \rangle$ requires $\log_2 k$ rotations using
Galois keys, but no further multiplications. The result is a sub-second
latency at $N_{\mathrm{docs}} = 10^6$.

\paragraph{Choice of $K_{\mathrm{cands}}$.} Since each of the
$K_{\mathrm{cands}}$ candidates is processed by one ct-pt operation, the
server-side latency grows linearly in $K_{\mathrm{cands}}$. Conversely, a
small $K_{\mathrm{cands}}$ trades against PQ-recall: the relevant
document must be in the short-list for the CKKS reranker to score it.
We pick $K_{\mathrm{cands}}$ as the smallest value for which
PQ-recall@$K_{\mathrm{cands}}$ matches the SVD-projected exact baseline
within the 5-seed CI; in our integral experiment this is
$K_{\mathrm{cands}} = 40$, but the parameter is a deployment knob and
should be retuned for new encoders or larger corpora.

\subsection{Cryptographic reproducibility}\label{ssec:crypto-detail}

For the auto-selected configuration
($N_{\mathrm{poly}} = 8192$, coefficient-modulus chain $[60,40,60]$ bits,
$\log_2 Q = 160$, scale $\Delta = 2^{40}$, TenSEAL/Microsoft SEAL
backend)---selected by the reproducible offline grid search of~\cite{ref_alpha}
($\approx 1.7\times$ faster than the TenSEAL stock $[60,40,40,60]$ chain at equal
security), reused here unchanged---we report the parameters needed to reproduce the cryptographic
budget rather than only the latency:

A single scoring query $q_{\mathrm{score}} \in \mathbb{R}^{k}$ packs into one ciphertext over
$k \leq N_{\mathrm{poly}}/2 = 4096$ slots, and since the integral encoders have
$k \in \{192,384,512\}$ one ciphertext per query suffices, with no
cross-ciphertext aggregation. Each candidate score then costs a single ct-pt
multiply, one rescale (one level consumed), and a slot-sum realised as
$\lceil \log_2 k \rceil$ ciphertext rotations under Galois keys ($8$--$9$
rotations for the tested $k$)---never a relinearisation or a
ciphertext-ciphertext multiply. The $[60,40,60]$ chain leaves exactly one
multiplicative level beyond the input, and after the lone rescale the
remaining $60$-bit modulus holds the additive noise far below the
$\Delta = 2^{40}$ scale, so decrypted scores track the plaintext inner product
to a relative error under $10^{-3}$ (Pearson $>0.9999$ against exact,
Section~\ref{ssec:integral}). The objects stay small: a fresh ciphertext is
$\approx 0.21$~MB, the Galois-key set for the power-of-two rotations is
$\approx 6$--$8$~MB, the relinearisation key is \emph{not} generated at all
(ct-pt only), and the public key is $\approx 0.4$~MB, so per query the client
uploads one ciphertext ($\approx 0.21$~MB) and downloads the
$K_{\mathrm{cands}}$ score ciphertexts ($\approx 40 \times 0.21 \approx 8.4$~MB
before base64; measured on-wire JSON+base64 sizes are in
Section~\ref{ssec:latency}). On security we assert no generic ``standard''
label, reporting instead $\log_2 Q = 160 < 218$ for $N_{\mathrm{poly}}=8192$,
within the conservative ternary-secret \texttt{tc128} bound of the
HomomorphicEncryption.org tables \cite{ref_he_std} and cross-checked against
the lattice-estimator methodology of Albrecht et al.\
\cite{ref_lwe_estimator}; the standard documents are guidelines, so we name the
concrete estimator and table used so the claim can be re-derived.

\section{A Projection Bound for the Distortion Proxy}\label{sec:lemma}

This short section concerns the \emph{baseline} only (the bound is
reproduced from~\cite{ref_alpha}): it states the one elementary fact behind
its $\sigma_{rec}$ distortion proxy, and why that proxy is not a privacy
metric---motivating \textsc{Shard}'s move to an
attack-aware design that does not rely on $\sigma_{rec}$ at all. Let
$\mathbf{x} \in \mathbb{R}^d$ denote a centred embedding,
$V_k \in \mathbb{R}^{d \times k}$ the orthonormal matrix from the SVD
truncation, and $\pi_k(\mathbf{x}) = V_k V_k^\top \mathbf{x}$ the
orthogonal projection onto $\mathrm{span}(V_k)$. We use a \emph{per-vector}
relative reconstruction error here,
$\sigma_{rec}(\mathbf{x}; V_k) =
\| \mathbf{x} - \pi_k(\mathbf{x}) \|_2 / \|\mathbf{x}\|_2$,
distinguished from the corpus-level Frobenius quantity
$\sigma_{rec}(E;V_k)$ of Section~\ref{ssec:offline} (the operating-point
threshold $0.10$ is applied to the corpus-level aggregate); write
$\mathbf{x}_\perp = \mathbf{x} - \pi_k(\mathbf{x})$.

\begin{proposition}\label{lemma:proj-l2}
Let $f : \mathbb{R}^d \to \mathbb{R}^d$ be any decoder whose image is
constrained by $f(\mathbf{y}) \in \mathrm{span}(V_k)$ for every
$\mathbf{y} \in \mathbb{R}^d$. Then for every
$\mathbf{x} \in \mathbb{R}^d$
\[
\| \mathbf{x} - f(\pi_k(\mathbf{x})) \|_2^{2}
\;\geq\; \| \mathbf{x}_\perp \|_2^{2},
\]
with equality at $f(\mathbf{y}) = \mathbf{y}$. In relative form,
$\| \mathbf{x} - f(\pi_k(\mathbf{x})) \|_2 / \|\mathbf{x}\|_2
\geq \sigma_{rec}(\mathbf{x}; V_k)$.
\end{proposition}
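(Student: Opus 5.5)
The plan is to use the orthogonal decomposition $\mathbb{R}^d = \mathrm{span}(V_k) \oplus \mathrm{span}(V_k)^\perp$ and the Pythagorean identity. Fix $\mathbf{x}$ and write $\mathbf{x} = \pi_k(\mathbf{x}) + \mathbf{x}_\perp$. Here $\pi_k(\mathbf{x}) = V_kV_k^\top\mathbf{x} \in \mathrm{span}(V_k)$. Since $V_k^\top V_k = I_k$, we have $V_k^\top \mathbf{x}_\perp = V_k^\top\mathbf{x} - V_k^\top\mathbf{x} = 0$, so $\mathbf{x}_\perp \in \mathrm{span}(V_k)^\perp$. Set $\mathbf{z} = f(\pi_k(\mathbf{x}))$. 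By the hypothesis on $f$, applied at the input $\mathbf{y} = \pi_k(\mathbf{x})$, we get $\mathbf{z} \in \mathrm{span}(V_k)$.

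The key step is then
\[
\mathbf{x} - \mathbf{z} = \bigl(\pi_k(\mathbf{x}) - \mathbf{z}\bigr) + \mathbf{x}_\perp .
\]
The first summand lies in $\mathrm{span}(V_k)$, because that span is a linear subspace. The second lies in its orthogonal complement, so the cross term vanishes. Hence
\[
\|\mathbf{x} - \mathbf{z}\|_2^2 = \|\pi_k(\mathbf{x}) - \mathbf{z}\|_2^2 + \|\mathbf{x}_\perp\|_2^2 \;\geq\; \|\mathbf{x}_\perp\|_2^2 .
\]
For equality, take $f(\mathbf{y}) = \mathbf{y}$. This $f$ is not admissible on all of $\mathbb{R}^d$, since its image is not contained in $\mathrm{span}(V_k)$. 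I would therefore read the equality clause as holding on the relevant input set, with $f(\mathbf{y}) = \mathbf{y}$ for $\mathbf{y} \in \mathrm{span}(V_k)$, or equivalently use $f = \pi_k$, which is admissible and agrees with the identity there. In either reading $\mathbf{z} = \pi_k(\mathbf{x})$, the first term is zero, and equality follows.

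The relative form comes from taking square roots and dividing by $\|\mathbf{x}\|_2$. This requires $\mathbf{x} \neq 0$. For $\mathbf{x} = 0$, $\sigma_{rec}$ is undefined and the squared statement holds trivially. Dividing gives exactly $\|\mathbf{x}_\perp\|_2/\|\mathbf{x}\|_2 = \sigma_{rec}(\mathbf{x};V_k)$.

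There is no real obstacle here. The only point that needs care is the equality clause: I would note explicitly that the identity map is admissible only on $\mathrm{span}(V_k)$, which is where it is ever evaluated, so the bound is attained by $f = \pi_k$.
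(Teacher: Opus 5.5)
Your proof is correct and follows the paper's argument: decompose $\mathbf{x}=\pi_k(\mathbf{x})+\mathbf{x}_\perp$, note that $f(\pi_k(\mathbf{x}))\in\mathrm{span}(V_k)$, and apply Pythagoras to $(\pi_k(\mathbf{x})-f(\pi_k(\mathbf{x})))+\mathbf{x}_\perp$. Your two additions are valid tightenings of points the paper's proof leaves implicit: the identity decoder is only admissible when restricted to $\mathrm{span}(V_k)$, so equality is attained by $f=\pi_k$, and the relative form requires $\mathbf{x}\neq 0$.
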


\begin{proof}
Decompose $\mathbf{x} = \pi_k(\mathbf{x}) + \mathbf{x}_\perp$ with
$\pi_k(\mathbf{x}) \in \mathrm{span}(V_k)$ and
$\mathbf{x}_\perp \in \mathrm{span}(V_k)^\perp$. By the image
restriction $f(\pi_k(\mathbf{x})) \in \mathrm{span}(V_k)$, hence
$\mathbf{x} - f(\pi_k(\mathbf{x})) =
\bigl( \pi_k(\mathbf{x}) - f(\pi_k(\mathbf{x})) \bigr) + \mathbf{x}_\perp$
is the sum of two orthogonal vectors. Pythagoras gives
$\| \mathbf{x} - f(\pi_k(\mathbf{x})) \|_2^{2}
= \| \pi_k(\mathbf{x}) - f(\pi_k(\mathbf{x})) \|_2^{2}
+ \| \mathbf{x}_\perp \|_2^{2}
\geq \| \mathbf{x}_\perp \|_2^{2}$.
\end{proof}

\paragraph{Scope: a projection bound, not an inversion-security
theorem.} Proposition~\ref{lemma:proj-l2} bounds the $L_2$ error of
decoders \emph{whose image is contained in} $\mathrm{span}(V_k)$. It is a
statement about information lost to the projection, not about the
security of text inversion. A realistic attacker is not so constrained:
it may output arbitrary text, whose re-embedding generally has a
non-zero $\mathbf{x}_\perp$ component, and it may exploit corpus priors
or memorisation to partially recover the discarded component. The
proposition therefore does \emph{not} lower-bound the achievable inversion BLEU,
token overlap or PII recovery. We use it only to make the
\emph{proxy} criterion $\sigma_{rec} \geq 0.10$ precise within the
projection-restricted class, and we stress that the threshold $0.10$ is
an \emph{engineering proxy}, calibrated on one encoder (GTR-base) against
one off-the-shelf attacker, \emph{not} a security threshold. A
per-encoder ablation of $\sigma_{rec}$ against BLEU, token overlap and
typed-PII recovery (names, addresses, e-mail, phone, medical terms) is
listed as a required pre-submission experiment
(Section~\ref{sec:limits}).

\begin{hypothesis}\label{hyp:bleu}
For an off-the-shelf inversion attack \textsc{Vec2Text}~\cite{ref_vec2text},
the expected BLEU of recovering the original text from
$\pi_k(\mathbf{x})$ is monotone in
$\eta_k = 1 - \sigma_{rec}^{2}(E; V_k)$:
\[
\mathbb{E}\bigl[ \mathrm{BLEU}(f(\pi_k(\mathbf{x})), T(\mathbf{x})) \bigr]
\;\approx\; \mathrm{BLEU}_0 + \gamma_f \, \eta_k,
\]
where $\mathrm{BLEU}_0$ is the BLEU of a ``random semantically close''
text and $\gamma_f$ is a decoder-specific constant.
\end{hypothesis}

Hypothesis~\ref{hyp:bleu} is examined numerically in
Appendix~\ref{ssec:vec2text}; we deliberately separate the formal
statement from the empirical observation.

\section{SHARD: Cell-Keyed Residual Splitting}\label{sec:shard}

The baseline of Section~\ref{sec:method} protects the collection with a
single global geometry $E_{\mathrm{rot}} = R\,V_k^\top(x-\mu)$. Its
weakness is structural: one global map is recoverable by one alignment
(Section~\ref{ssec:alignment-pq}), and the public index sits on the full
protected space. \textsc{Shard} keeps the same CKKS ct-pt reranking but
removes the single global axis. It is a \emph{family} of transforms
parametrised by a cell count $C$.

\paragraph{Offline construction.} Let $\mu$ be the corpus centroid and
$V\in O(d)$ the PCA basis fitted to centred document embeddings (eigenvectors
of the covariance, sorted by variance). For a document $x_i$ write
\[
V^\top(x_i-\mu)=[\,u_i\,\|\,r_i\,],
\]
splitting its centred PCA coordinate into a
\emph{public prefix} $u\in\mathbb{R}^{d_{\mathrm{pub}}}$ (the top variance
directions) and a \emph{private residual} $r\in\mathbb{R}^{d_{\mathrm{priv}}}$,
$d_{\mathrm{priv}}=d-d_{\mathrm{pub}}$. A coarse partition into $C$
\emph{cells} is defined by $k$-means on $u$. The data owner assigns one
orthogonal key $H_c\in O(d_{\mathrm{priv}})$ per cell. The experiments sample
dense Haar-like matrices by Gaussian QR under deterministic research seeds;
a deployment would need a CSPRNG-derived key schedule and should benchmark a
structured fast-orthogonal representation. The stored
representation of document $i$ in cell $c(i)$ is the pair
\[
\big(\, u_i,\ \ z_i = H_{c(i)}\, r_i \,\big),
\]
the prefix $u_i$ for the stage-1 index and the keyed residual shard $z_i$
for reranking. The keys $\{H_c\}$, $V$ and $\mu$ are client secrets.

\paragraph{Online two-stage query.} Routing and final scoring use two related
query coordinates. The centred prefix
$\widetilde u_q=V_{\mathrm{pub}}^\top(q-\mu)$ drives stage-1 ANN, while final
scoring uses
\[
u_q=V_{\mathrm{pub}}^\top q,\qquad
r_q=V_{\mathrm{priv}}^\top q.
\]
The two prefix vectors differ only by the fixed offset
$V_{\mathrm{pub}}^\top\mu$. The client (i)~runs stage-1 ANN locally with
$\widetilde u_q$ to obtain $K_{\mathrm{cands}}$ candidates; then (ii)~for
every active cell $c$ forms $H_c r_q$. In the width-1 layout it sends one
$\mathrm{Enc}(H_c r_q)$ per cell; block-SIMD optionally concatenates $B$
cell vectors in one ciphertext, with zero-masked plaintext operands selecting
the candidate's block. The server returns ct--pt inner products against the
stored shards $z_i$. Because $H_c$
is orthogonal,
\[
\langle H_c r_q,\ z_i\rangle=\langle H_c r_q,\ H_c r_i\rangle=\langle r_q, r_i\rangle ,
\]
in exact plaintext arithmetic. CKKS adds its usual numerical approximation.
The scoring-prefix term can be added as plaintext to the encrypted response
before decryption, as in Experiment~26, or combined by the client afterwards:
\[
S(q,i)=\langle u_q,u_i\rangle+\langle r_q,r_i\rangle
=q^\top(x_i-\mu)=q^\top x_i-q^\top\mu .
\]
The last term is document-independent, so the exact-arithmetic ordering is the
raw dot-product ordering. In the plaintext evaluation, the remaining measured
retrieval loss is whether stage~1 places a relevant document in the short-list;
Experiment~26 performs the complementary check with actual CKKS: across 315
trials it finds no top-1 flips, unit top-10 overlap and maximum absolute score
error $2.29\times10^{-6}$.

\begin{proposition}[Rank-preserving cell-keyed reranking]\label{prop:exact}
Let $H_c\in O(d_{\mathrm{priv}})$ be the orthogonal key of cell $c$ and
$z_i=H_{c(i)}r_i$ the stored shard of document $i$ in cell $c(i)$. For any
query residual $r_q$ and any document $i$,
\[
\big\langle H_{c(i)}\,r_q,\ z_i\big\rangle=\langle r_q,r_i\rangle ,
\]
so $S(q,i)=q^\top x_i-q^\top\mu$ is reconstructed with no truncation in
exact arithmetic. Consequently,
$S(q,i)-S(q,j)=q^\top x_i-q^\top x_j$, and the raw document ranking is
preserved. CKKS evaluation is approximate, not algebraically exact.
Consequently, for documents $i,j$ in the \emph{same} cell,
$\langle z_i,z_j\rangle=\langle r_i,r_j\rangle$.
\end{proposition}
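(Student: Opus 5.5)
The argument is pure linear algebra and has three steps, taken in this order.

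\emph{Step 1: the keyed inner product.} Use orthogonality of the cell key, $H_c^\top H_c=I_{d_{\mathrm{priv}}}$ for every $c$. For document $i$ with key $H_{c(i)}$,
\[
\langle H_{c(i)}r_q, z_i\rangle = r_q^\top H_{c(i)}^\top H_{c(i)} r_i = \langle r_q,r_i\rangle .
\]
The only point to check is that the client uses the key of the \emph{candidate's own} cell $c(i)$, not the query's routing cell. This is guaranteed because the client forms $H_c r_q$ for every active cell $c$, and in the block layout the zero mask selects the candidate's block.

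\emph{Step 2: reconstructing the score.} Since $V\in O(d)$, we have $VV^\top=I_d$. Splitting $V=[V_{\mathrm{pub}}\,\|\,V_{\mathrm{priv}}\,]$ gives $V_{\mathrm{pub}}V_{\mathrm{pub}}^\top+V_{\mathrm{priv}}V_{\mathrm{priv}}^\top=I_d$. With $u_q=V_{\mathrm{pub}}^\top q$, $r_q=V_{\mathrm{priv}}^\top q$, $u_i=V_{\mathrm{pub}}^\top(x_i-\mu)$ and $r_i=V_{\mathrm{priv}}^\top(x_i-\mu)$,
\[
\langle u_q,u_i\rangle+\langle r_q,r_i\rangle = q^\top\bigl(V_{\mathrm{pub}}V_{\mathrm{pub}}^\top+V_{\mathrm{priv}}V_{\mathrm{priv}}^\top\bigr)(x_i-\mu)=q^\top x_i-q^\top\mu .
\]
This is the ``no truncation'' claim. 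The key point is that $V$ is the full PCA basis of $\mathbb{R}^d$, so the two blocks together resolve the identity. This is what distinguishes the construction from the $V_k$ of Section~\ref{ssec:offline}, where only $V_kV_k^\top$ appears.

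\emph{Step 3: ranking preservation.} Subtracting the expression from Step 2 for $i$ and $j$ cancels the document-independent term $q^\top\mu$, which gives $S(q,i)-S(q,j)=q^\top x_i-q^\top x_j$. Hence $S(q,i)>S(q,j)$ if and only if $q^\top x_i>q^\top x_j$, for every pair $i,j$, so the ordering of any candidate set by $S$ agrees with the raw dot-product ordering.

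The remark about CKKS being approximate is a scope statement, not something to prove. I would point to Experiment~26 for the measured error rather than derive a bound.

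The same-cell consequence follows from the identity in Step 1 with both arguments keyed. If $c(i)=c(j)=c$, then $\langle z_i,z_j\rangle=r_i^\top H_c^\top H_c r_j=\langle r_i,r_j\rangle$. I would also note the contrast for different cells: there $H_{c(i)}^\top H_{c(j)}$ does not cancel. This is why the within-cell Gram leakage is exact while the cross-cell leakage is not.

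I do not expect a real obstacle. The only step needing care is getting the bookkeeping in Step 2 right: the query scoring coordinates are uncentred while the document coordinates are centred. This asymmetry is exactly what produces the query-dependent constant $q^\top\mu$ instead of a document-dependent term such as $\mu^\top x_i$, which would break rank preservation.
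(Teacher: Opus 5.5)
Your proposal is correct and follows the paper's proof: orthogonality $H_c^\top H_c=I$ gives the keyed inner product and the same-cell identity, the prefix and residual terms sum to $q^\top(x_i-\mu)$, and the document-independent offset $q^\top\mu$ cancels in pairwise differences. The only difference is that you spell out $V_{\mathrm{pub}}V_{\mathrm{pub}}^\top+V_{\mathrm{priv}}V_{\mathrm{priv}}^\top=VV^\top=I_d$, which the paper compresses into one clause and which is exactly what justifies the ``no truncation'' claim.
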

\begin{proof}
$H_c$ is orthogonal, hence $H_c^\top H_c=I$ and
$\langle H_c r_q,H_c r_i\rangle=r_q^\top H_c^\top H_c\,r_i=\langle r_q,r_i\rangle$;
the prefix and residual terms together give $q^\top(x_i-\mu)$. Its offset
from $q^\top x_i$ is constant across documents. The same-cell identity
applies the same equality to the shared key $H_c$ of two stored residuals.
\end{proof}

\paragraph{Cross-release invariants and a non-claim.} Orthogonal keying is
invertible for a party that holds the key; without the key it is an
unidentified orientation, not a non-invertible template. Fresh orientations
also do not make two snapshots unlinkable. For two releases $a,b$ of the same
residual under independent orthogonal keys,
\[
\|z_i^{(a)}\|_2=\|r_i\|_2=\|z_i^{(b)}\|_2,
\qquad
\langle z_i^{(a)},z_j^{(a)}\rangle
=\langle r_i,r_j\rangle
=\langle z_i^{(b)},z_j^{(b)}\rangle
\]
whenever $i$ and $j$ share a cell. The prefix $u_i$ is unchanged as well.
Norms, within-cell Gram signatures and the prefix are therefore stable
cross-release identifiers. Re-keying changes coordinates and preserves
plaintext retrieval, but \textsc{Shard} does not claim unlinkability,
irreversibility or a cancellable-template guarantee. We measure these
channels explicitly in Section~\ref{ssec:shard-micro}.

\paragraph{Proved vs.\ measured, and what \textsc{Shard} does not promise.}
What \textsc{Shard} \emph{proves} is precisely Proposition~\ref{prop:exact}:
raw-rank-equivalent full-dimensional reranking in plaintext algebra, and the same-cell cancellation
$\langle z_i,z_j\rangle=\langle r_i,r_j\rangle$. Everything else below is
\emph{measured}, not guaranteed: partial known-pair alignment
(\S\ref{ssec:shard-align}), learned and unsupervised attackers
(\S\ref{ssec:shard-learned}), public-prefix leakage
(\S\ref{ssec:shard-leak}), micro-key and cross-release behaviour
(\S\ref{ssec:shard-micro}), and online cost
(\S\ref{ssec:shard-cost}). Three non-promises follow directly from the
proposition and the construction: \textsc{Shard} is \emph{not} a cryptographic
guarantee on the store; it does \emph{not} hide the coarse neighbour graph
(same-cell similarities are computable by the server unless $C{=}N$); and an
\emph{overlapping plaintext reference corpus} still de-anonymises through the
public prefix (\S\ref{ssec:shard-ref}). The effect \textsc{Shard} measures is
compartmentalisation of a \emph{diffuse} alignment stream, not a hard privacy
barrier or protection of the coarse graph.

\paragraph{The $C$ family and micro-keys.} The cell count interpolates
between two regimes. At $C{=}1$ the residual carries a single global key;
this is a full-dimensional SHARD control, not the earlier truncated baseline.
At the other extreme, a \emph{micro-key} variant assigns one key per
document ($C{=}N$). No two residuals then share a key, so the directly
comparable within-release residual graph vanishes, at the cost of one
transformed query per candidate. This does not remove the stable prefix or
residual norm and therefore does not imply cross-release unlinkability.

\paragraph{What the server sees, and what it protects.} The honest-but-curious
server holds the coarse prefix $u$, cell labels and the cell-keyed shards $z$;
it also observes candidate identifiers, ciphertext counts and the query access
pattern. Three facts
frame every experiment below. (1)~\emph{Within a cell the keys cancel}:
$\langle z_i,z_j\rangle=\langle r_i,r_j\rangle$ for $i,j$ in the same cell,
so same-cell similarities are computable by the server; \textsc{Shard}
localises but does not eliminate neighbour-graph leakage. (2)~\emph{Partial
mapping precedes exact key recovery}: known pairs identify the key action on
their span even while the complete $H_c$ remains underdetermined. Exact key
identification generically needs full in-cell rank, but retrieval and record
matching may succeed much earlier. (3)~The stable prefix, residual norms and
within-cell Gram signatures provide direct cross-release linkage channels.
Cell sharding reduces the rate at which a diffuse anchor stream fills any one
cell; it does not create a hard privacy threshold.

\section{Experiments}\label{sec:experiments}

The experiments come in two groups. Sections~\ref{ssec:alignment-pq}--\ref{ssec:latency} characterise the
global-linear baseline, opening with the failure modes that motivate
\textsc{Shard}: known-plaintext Procrustes alignment recovers the rotation
from $\approx k$ pairs, the public PQ codes leak neighbour structure, and an
overlapping reference corpus turns alignment into exact lookup
(Section~\ref{ssec:alignment-pq}); the baseline's utility and latency are
then recapped from~\cite{ref_alpha}. The
inversion-versus-rotation analyses reproduced from the global-linear
paper~\cite{ref_alpha}---off-the-shelf and aligned Vec2Text, the
lightweight projection baselines, and the calibrated-noise diagnostic
that $\sigma_{rec}$ is not a privacy metric---are condensed in
Appendix~\ref{app:baseline}. Sections~\ref{ssec:shard-utility}--\ref{ssec:shard-ref}
then \emph{evaluate \textsc{Shard}}: utility, the online cost, alignment
under rank-deficient and minimum-norm estimators, public-index leakage,
cross-release linkage for cell and micro-keys, and the overlap-reference
limitation. The maximal audit then adds a real CKKS/block-SIMD execution,
insert/delete churn, a formally calibrated Gaussian release, and a learned
text-inversion outcome. The former uncalibrated ``DP-noise'' comparison is
retained only as an artifact-side distortion diagnostic.
Readers interested only in the contribution can skip to
Section~\ref{ssec:shard-utility}; the baseline group is the foil.

\paragraph{Hardware and software.} All local experiments run on a single
workstation: Intel Core i5-14400F, 32~GB DDR5-4800, NVIDIA RTX~5060
(8~GB VRAM), Windows 11. The baseline artifact retains its original pinned
environment and CKKS build information. The corrective Experiments~23--25
reported here ran under Python~3.12.3, NumPy~2.4.6, SciPy~1.17.1,
scikit-learn~1.8.0 and Matplotlib~3.10.9; each output directory records the
command, seeds, configuration and Git revision. Experiments~26--28 use the
same workstation. Experiment~26 adds TenSEAL~0.3.16 and psutil~7.2.2, pins one
logical thread per P-core, and executes SEAL CKKS on the CPU; Experiments~27
and~28 use the recorded NumPy/SciPy stack. The local learned-inversion run
uses the RTX~5060 and its own frozen package report. The earlier aligned
Vec2Text stress test retained in Appendix~\ref{ssec:adaptive-inversion} used
a separate memory-capped RTX~4090 profile with roughly 11~GiB free VRAM,
Python~3.12 and PyTorch~2.6.0+cu124. We keep these environments distinct
rather than treating latency across machines as comparable.

\subsection{Attacks on the baseline: alignment, PQ leakage, and inversion}\label{ssec:alignment-pq}

We now stress the global-linear baseline with four attacks that step outside
the primary non-adaptive threat model: known-plaintext alignment of the
secret rotation, leakage from the public PQ codes, a text-level
reference-corpus lookup, and an aligned off-the-shelf inverter.

\paragraph{Known-plaintext alignment.} We simulate an attacker who knows
$m$ pairs $(E(x_i), E_{\mathrm{rot},i})$ and estimates the hidden
orientation with the orthogonal Procrustes solution
\cite{ref_schonemann}. The experiment uses e5-small projected to
$k=192$, five rotation seeds, $400$ held-out probes and a
$10\,000$-document gallery. The attack is evaluated in embedding space:
after alignment, the probe should recover its own projected vector and
retrieve the corresponding document from the gallery. This does not
measure generated text quality, but it is the relevant precursor to a
few-shot Vec2Text-style attack because it recovers the native projected
space given to the inverter.

\begin{table}[tbp]
\centering
\caption{Known-plaintext Procrustes attack. $m=-1$ is the known-$R$
oracle; $m=0$ is no alignment. Values are means over five rotation
seeds.}
\label{tab:procrustes}
\small
\begin{tabular}{@{}rrrrr@{}}
\toprule
Known pairs $m$ & Mean cosine & Rel.\ $L_2$ error & Target R@1 & Target R@10 \\ \midrule
$0$   & $-0.004$ & $1.417$ & $0.000$ & $0.001$ \\
$10$  & $0.098$  & $1.343$ & $0.005$ & $0.039$ \\
$50$  & $0.377$  & $1.116$ & $0.722$ & $0.925$ \\
$100$ & $0.621$  & $0.870$ & $1.000$ & $1.000$ \\
$192$ & $1.000$  & $1.8{\times}10^{-5}$ & $1.000$ & $1.000$ \\
$500$ & $1.000$  & $4.9{\times}10^{-7}$ & $1.000$ & $1.000$ \\
\bottomrule
\end{tabular}
\end{table}

The conclusion is sharp: the rotation is useful only while aligned
plaintext anchors are unavailable. With approximately $k$ known pairs,
the orientation is numerically recovered. Even $50$ pairs do not recover
the vector perfectly, but they are enough to retrieve the target in the
top~10 for $92.5\%$ of probes in a $10\,000$-document gallery. This is a
negative result for broad document-privacy claims and is therefore
reflected in the threat table.

\begin{figure}[tbp]
\centering
\includegraphics[width=0.58\linewidth]{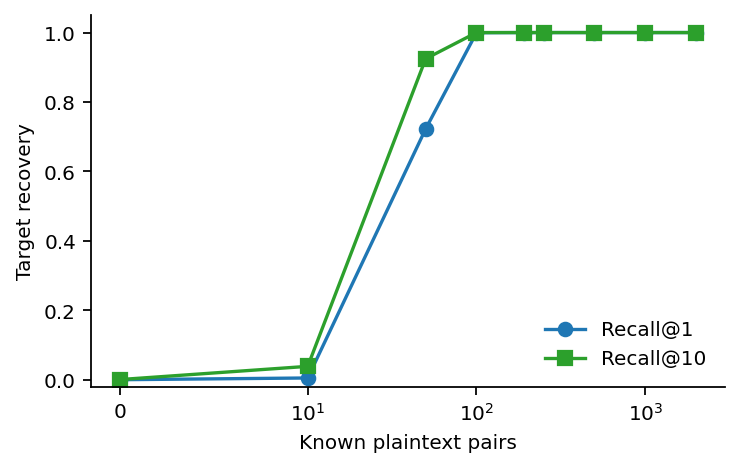}
\caption{Known-plaintext alignment quickly destroys the unknown-rotation
assumption. The $x$ axis uses a symlog scale to include $m=0$.}
\label{fig:procrustes}
\end{figure}

\paragraph{PQ-code leakage.} We also train public product-quantisation
codebooks directly in the rotated space and reconstruct approximate
vectors from codebook + per-document codes alone. On a $20\,000$-document
sample, the canonical e5-small configuration ($M=48$, 8 bits) preserves
a mean cosine of $0.953$ to $E_{\mathrm{rot}}$, $67.4\%$ of exact top-10
neighbours, and $97.1\%$ of exact top-10 neighbours inside the approximate
top-40 candidate set. Smaller artefacts leak less but remain
non-negligible (Table~\ref{tab:pq-leakage}).

\begin{table}[tbp]
\centering
\caption{Leakage from public PQ codes on a $20\,000$-document rotated
e5-small sample. Neighbour overlap excludes the query document itself.}
\label{tab:pq-leakage}
\small
\begin{tabular}{@{}rrrrr@{}}
\toprule
$M$ & Bits & Bytes/vector & Cosine to $E_{\mathrm{rot}}$ & NN overlap@10 \\ \midrule
$24$ & $8$ & $24$ & $0.837$ & $0.384$ \\
$48$ & $8$ & $48$ & $0.953$ & $0.674$ \\
$48$ & $6$ & $36$ & $0.904$ & $0.529$ \\
\bottomrule
\end{tabular}
\end{table}

\begin{figure}[tbp]
\centering
\includegraphics[width=0.58\linewidth]{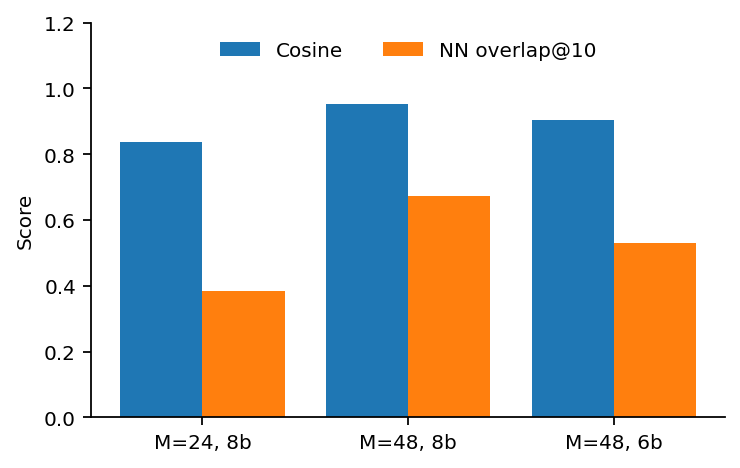}
\caption{Public PQ codes preserve both vector direction and substantial
nearest-neighbour structure in the rotated space.}
\label{fig:pq-leakage}
\end{figure}

These results do not invalidate the query-privacy contribution of CKKS,
but they do narrow the document-privacy interpretation: public PQ
artefacts and known plaintext anchors should be treated as exposed
channels. A system requiring stronger document confidentiality needs a
different static-side primitive or a composition that hides the PQ
artefact itself.

\paragraph{Reference-corpus lookup after alignment.}\label{ssec:reference-attack}

The Procrustes experiment above is an embedding-space test. We next ask
whether the same failure mode can become a text-level leakage channel
without training a generative inverter. The attacker is given the
protected target vectors, $m$ known plaintext/protected pairs for
estimating the rotation, and a reference corpus of candidate paragraphs
with native e5-small embeddings. This models a realistic overlap case:
some private documents may also occur in a public or previously leaked
corpus. We use $500$ target paragraphs, $100\,000$ reference decoys,
five rotation seeds and the same $k=192$ SVD subspace. The overlapping
reference corpus contains the targets plus the decoys; the disjoint
reference corpus removes the targets and is used only as a lexical
nearest-neighbour proxy.

\begin{table}[tbp]
\centering
\caption{Reference-corpus lookup attack after known-plaintext alignment.
The overlapping reference has $500$ targets plus $100\,000$ decoys.
Jaccard is token-set overlap between the target paragraph and the top-1
retrieved paragraph. Values are means over five rotation seeds.}
\label{tab:reference-attack}
\small
\begin{tabular}{@{}rrrrr@{}}
\toprule
Known pairs $m$ & Exact R@1 & Exact R@10 & Jaccard overlap-ref & Jaccard disjoint-ref \\ \midrule
$0$   & $0.000$ & $0.000$ & $0.006$ & $0.006$ \\
$10$  & $0.000$ & $0.009$ & $0.009$ & $0.009$ \\
$25$  & $0.044$ & $0.132$ & $0.055$ & $0.012$ \\
$50$  & $0.535$ & $0.784$ & $0.543$ & $0.018$ \\
$100$ & $0.998$ & $1.000$ & $0.999$ & $0.034$ \\
$192$ & $1.000$ & $1.000$ & $1.000$ & $0.043$ \\
\bottomrule
\end{tabular}
\end{table}

\begin{figure}[tbp]
\centering
\includegraphics[width=0.86\linewidth]{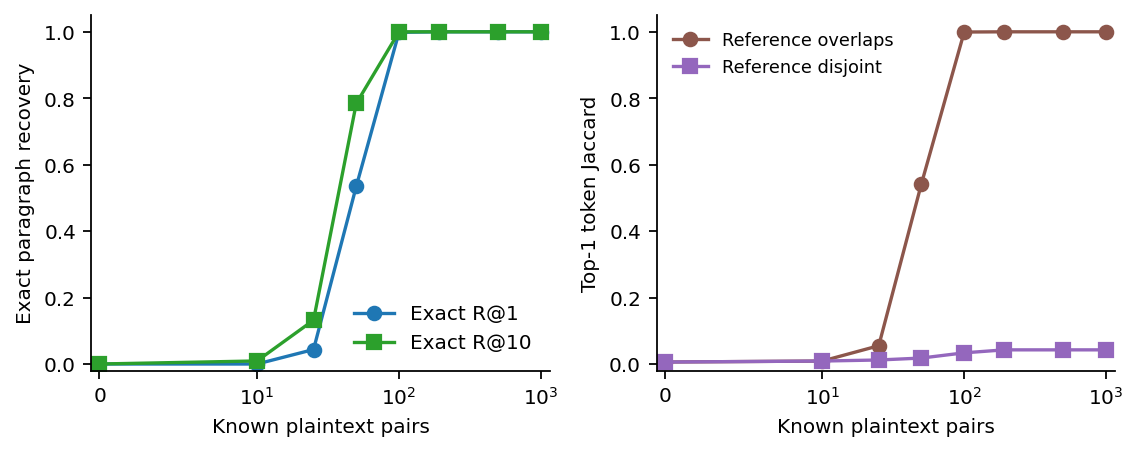}
\caption{Reference-corpus lookup after known-plaintext alignment. When the
reference corpus overlaps the protected collection, alignment turns the
protected vector into an exact paragraph lookup. With a disjoint reference
corpus, this simple token-overlap proxy remains low; semantic leakage
there requires a stronger judge or generative attack.}
\label{fig:reference-attack}
\end{figure}

The overlap case is severe. With only $50$ known pairs, the attacker
recovers the exact source paragraph at $53.5\%$ top-1 and $78.4\%$
top-10 recall among $100\,500$ candidates; with $100$ known pairs,
top-1 recall rises to $99.8\%$. This is the text-level analogue of the
embedding-space Procrustes result and makes the static-side limitation
concrete: if a public or leaked reference corpus overlaps the private
collection, rotation secrecy is not enough. The disjoint-reference
numbers are intentionally reported as a boundary condition: token
Jaccard remains low ($0.034$ at $m=100$ and $0.043$ at $m=192$), so a
non-overlapping reference corpus requires semantic similarity metrics,
human judgement, or a generative inverter before one can claim text
reconstruction.

\paragraph{Aligned Vec2Text inversion stress test.} As a portable
generative-inversion control on the baseline, we also ran an aligned
off-the-shelf \textsc{Vec2Text} corrector (GTR-base, $k=d/2$, up to
$m=500$ Procrustes pairs) against the rotated store. No exact document or
typed PII is recovered in any case, but the \emph{raw} row is also at the
reconstruction floor, so this off-the-shelf corrector certifies nothing;
the open risk is a learned, corpus-adapted decoder. The full setup and
table are reproduced from~\cite{ref_alpha} in
Appendix~\ref{ssec:adaptive-inversion}; the result does not contradict
the exact-lookup failure mode of Section~\ref{ssec:reference-attack}.

\subsection{Baseline utility (recap)}\label{ssec:projection-baselines}\label{ssec:integral}\label{ssec:significance}\label{ssec:beir}

The utility of the global-linear baseline is established in our prior
work~\cite{ref_alpha} and is not reproduced here. In brief, at
$k=d/2$ on the $10^6$-document corpus the PQ$+$CKKS wrapper is
metric-preserving in $\mathrm{span}(V_k)$ (CKKS-decrypted versus exact
scores correlate at Pearson $>0.9999$); the SVD step is a mild,
encoder-dependent denoiser whose apparent Acc@1 gains are \emph{not}
statistically significant and which \emph{significantly reduces} nDCG@10 on
graded-relevance BEIR, so $k$ is a tunable accuracy cost rather than a free
denoiser, met within the end-to-end accuracy budget only for
retrieval-trained $d\geq768$ encoders. \textsc{Shard} reuses the baseline's
ct--pt arithmetic but not its half-space score: in exact plaintext algebra it
reranks in the full space (Section~\ref{sec:shard}). Its own retrieval quality
is evaluated in Section~\ref{ssec:shard-utility}.

\subsection{Baseline latency (recap)}\label{ssec:latency}

End-to-end latency of the global-linear baseline is reported in full
in~\cite{ref_alpha}; the point that matters for \textsc{Shard} is
architectural. The pipeline reranks only the $K_{\mathrm{cands}}{=}40$
stage-1 candidates, not $10^6$, giving a server-side $p_{95}$ of
$219$--$356$~ms and $\approx578$~ms end-to-end once HTTP/JSON framing is
added; the local PQ stage---not CKKS---reduces $10^6$ documents to $40$, so
the cryptographic cost is independent of corpus size $N_{\mathrm{docs}}$.
\textsc{Shard} adds a two-stage rescoring whose marginal online cost is
measured in Section~\ref{ssec:shard-cost}.

\subsection{SHARD: utility and online cost}\label{ssec:shard-utility}

The experiments above characterise the global-linear baseline. We now
evaluate \textsc{Shard} on the same cached embeddings, computing the
geometry with \texttt{numpy}. This verifies the plaintext identity; it is not
a substitute for measuring CKKS approximation in an end-to-end prototype.

The corrected protocol exposes an error in the earlier comparison. Documents
use $V^\top(x-\mu)$, but a ranking-preserving scoring query must use
$V^\top q$. Centring both sides adds a document-dependent term and changes
the ranking. Experiment~23 evaluates raw, legacy-centred, corrected
half-PCA, and corrected two-stage scores on every judged query in six
BEIR~\cite{ref_beir}
and four MIRACL cells, with $10^4$ paired-bootstrap resamples. In an
independent float64 check the corrected full score and raw score have the
same complete ordering in all tested queries (maximum numerical error
$7.71\times10^{-16}$). Table~\ref{tab:shard-beir} shows that the $d/4$ SHARD
shortlist followed by corrected full reranking remains within $0.009$ nDCG
of raw; its small residual delta is shortlist recall, not score distortion.

\paragraph{Why this is not merely ``do not truncate''.} A reader might
object that the baseline could also rerank full-dimensionally by storing
the untruncated $E_{\mathrm{rot}}$. It cannot do so for free: an untruncated
plaintext store has a \emph{single global key that cancels for all pairs},
so the server recovers the entire $d$-dimensional neighbour graph. The
point of \textsc{Shard} is narrower: full-dimensional reranking need not
create a \emph{globally} comparable residual geometry. It does create exact
within-cell geometry and stable norm signatures, so it is not privacy-neutral.
The utility result and the leakage result must be read together.

\begin{table}[tbp]
\centering
\caption{\textsc{Shard} utility on three BEIR datasets (SciFact, NFCorpus,
ArguAna; nDCG@10, $K_{\mathrm{cands}}{=}200$). Each protected column reports
the change $\Delta$ vs.\ raw with a paired-bootstrap $95\%$ CI over
per-query nDCG@10 ($10^4$ resamples, seed $2026$). ``Centred full'' isolates
the legacy query-centring error; ``corrected half'' isolates genuine
half-PCA truncation; \textsc{Shard} uses corrected full scoring after a
centred $d/4$ router.}
\label{tab:shard-beir}
\scriptsize
\setlength{\tabcolsep}{4pt}
\begin{tabular}{@{}llrrrr@{}}
\toprule
Encoder & Dataset & raw & centred full $\Delta$\,[95\% CI] & corrected half $\Delta$\,[95\% CI] & \textsc{Shard} $d/4$ $\Delta$\,[95\% CI]\\
\midrule
e5-base  & SciFact  & $0.637$ & $-0.051\,[{-}.072,{-}.031]$ & $-0.015\,[{-}.028,{-}.002]$ & $\mathbf{+0.001\,[.000,.003]}$\\
e5-base  & NFCorpus & $0.327$ & $-0.022\,[{-}.031,{-}.012]$ & $0.000\,[{-}.005,.005]$ & $\mathbf{-0.001\,[{-}.002,.000]}$\\
e5-base  & ArguAna  & $0.346$ & $-0.080\,[{-}.089,{-}.071]$ & $-0.008\,[{-}.012,{-}.003]$ & $\mathbf{-0.001\,[{-}.002,.000]}$\\
e5-small & SciFact  & $0.598$ & $-0.039\,[{-}.059,{-}.020]$ & $-0.049\,[{-}.068,{-}.030]$ & $\mathbf{-0.008\,[{-}.020,.001]}$\\
e5-small & NFCorpus & $0.302$ & $-0.007\,[{-}.016,.002]$ & $-0.012\,[{-}.019,{-}.005]$ & $\mathbf{-0.002\,[{-}.004,.000]}$\\
e5-small & ArguAna  & $0.358$ & $-0.038\,[{-}.046,{-}.030]$ & $-0.030\,[{-}.038,{-}.022]$ & $\mathbf{-0.002\,[{-}.004,.000]}$\\
\bottomrule
\end{tabular}
\end{table}

\begin{figure}[tbp]
\centering
\includegraphics[width=0.90\linewidth]{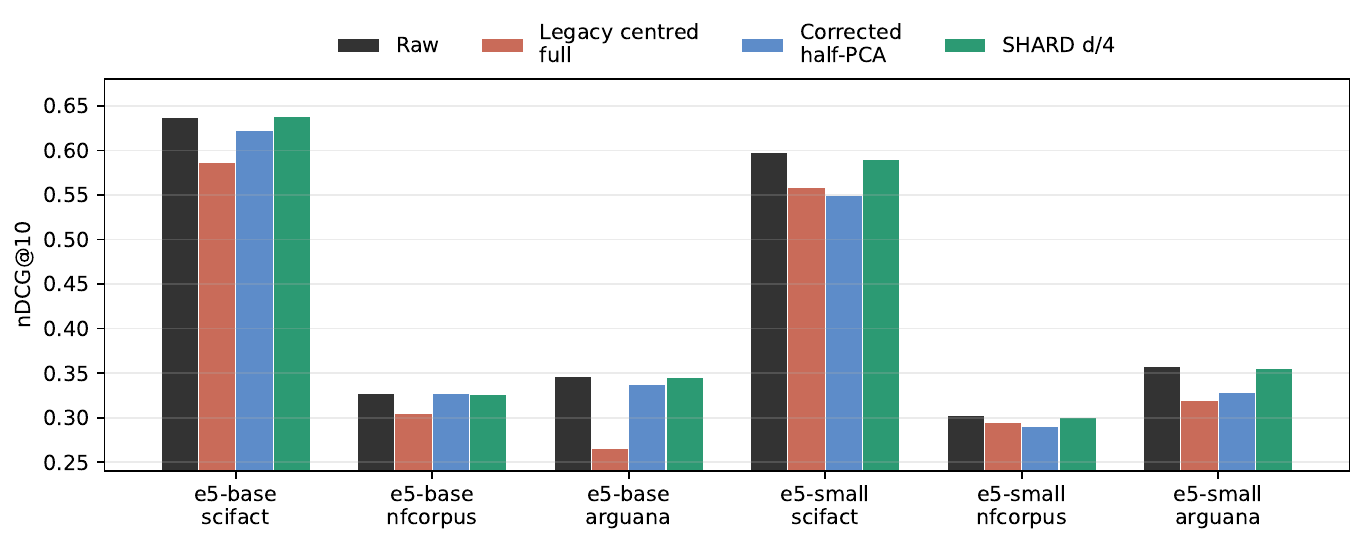}
\caption{BEIR utility after correcting query centring. The corrected
half-PCA gap is smaller than the legacy result but remains dataset-dependent;
the $d/4$ router followed by corrected full scoring stays close to raw.}
\label{fig:shard-beir}
\end{figure}

\noindent
The corrected half-PCA result changes the interpretation. Truncation remains
significantly harmful on five cells, but e5-base/NFCorpus is effectively
unchanged ($\Delta=-0.00002$); much of the previous baseline gap came from
query centring rather than discarded dimensions. The SHARD $d/4$ deltas stay
between $-0.008$ and $+0.001$, confirming the rank-preserving score and
showing the remaining shortlist effect without overstating truncation.

\paragraph{Multilingual replication at $10^5$ scale (MIRACL).}
To test whether the BEIR pattern holds on a large, \emph{multilingual}
benchmark scored with a graded ranking metric, we repeat the full-corpus
protocol on MIRACL---the canonical multilingual IR benchmark (binary
judgements scored with nDCG@10)---for two typologically distinct languages
with manageable corpora: Swahili (Bantu, Latin script; $131{,}924$ passages,
$482$ dev queries) and Bengali (Indo-Aryan, Bengali script; $297{,}265$
passages, $411$ queries). Table~\ref{tab:shard-miracl} shows the same
pattern at $10^5$-document scale. Corrected half-PCA loses $0.012$--$0.016$
nDCG on e5-base and $0.039$--$0.053$ on e5-small; the legacy centring error
is separately visible in every row. The corrected $d/4$ SHARD pipeline is
equal to raw on e5-base and within $0.006$--$0.009$ on e5-small. This
reinforces $d/4$ as a reasonable router, while attributing each loss to the
right source.

\begin{table}[tbp]
\centering
\caption{\textsc{Shard} utility on MIRACL (multilingual, nDCG@10,
$K_{\mathrm{cands}}{=}200$): full-corpus retrieval over the dev queries of two
typologically distinct languages. Columns as in Table~\ref{tab:shard-beir}:
$\Delta$ vs.\ raw with a paired-bootstrap $95\%$ CI over per-query nDCG@10
($10^4$ resamples, seed $2026$). $|C|$ is the corpus size; MIRACL uses binary
judgements scored with the graded nDCG@10 metric.}
\label{tab:shard-miracl}
\scriptsize
\setlength{\tabcolsep}{4pt}
\begin{tabular}{@{}llrrrr@{}}
\toprule
Encoder & Language ($|C|$) & raw & centred full $\Delta$\,[95\% CI] & corrected half $\Delta$\,[95\% CI] & \textsc{Shard} $d/4$ $\Delta$\,[95\% CI]\\
\midrule
e5-base  & Swahili ($132$k) & $0.707$ & $-0.019\,[{-}.032,{-}.007]$ & $-0.012\,[{-}.020,{-}.003]$ & $\mathbf{+0.000\,[.000,.001]}$\\
e5-base  & Bengali ($297$k) & $0.727$ & $-0.031\,[{-}.043,{-}.020]$ & $-0.016\,[{-}.025,{-}.007]$ & $\mathbf{0.000\,[.000,.000]}$\\
e5-small & Swahili ($132$k) & $0.680$ & $-0.036\,[{-}.051,{-}.021]$ & $-0.053\,[{-}.068,{-}.037]$ & $\mathbf{-0.006\,[{-}.015,.001]}$\\
e5-small & Bengali ($297$k) & $0.688$ & $-0.033\,[{-}.046,{-}.020]$ & $-0.039\,[{-}.055,{-}.024]$ & $\mathbf{-0.009\,[{-}.018,{-}.001]}$\\
\bottomrule
\end{tabular}
\end{table}

\paragraph{Online cost and the $C$ trade-off.}\label{ssec:shard-cost}

\textsc{Shard} is not free relative to the single-query baseline, and we
first expose the structural active-cell cost before measuring the concrete
CKKS path. The client sends one encrypted
residual query \emph{per active cell}---a cell touched by the stage-1
short-list---so the per-query upload and client-side encryption scale with
the number of distinct cells the short-list spans. Table~\ref{tab:shard-cost}
reports the active-cell counts. At $C{=}256$, $K_{\mathrm{cands}}{=}200$ the
short-list spans about $30$ cells. The historical $6$~MB figure obtained by
multiplying this count by the baseline ciphertext size is deliberately kept
as a first-order model; the measured serialization below supersedes it for
the implemented layouts. The structural trade-off remains: larger $C$ slows the accumulation
of in-cell alignment evidence but costs more active-cell query transforms. The
numbers are consistent across encoders (e5-base spans $\approx 9$--$35$
cells over the same grid). Active cells grow because $k$-means cells are
finer than a query's neighbourhood; a deployment tunes $(C,
K_{\mathrm{cands}}, d_{\mathrm{pub}})$ to balance alignment resistance,
stage-1 recall and upload, and the micro-key extreme ($C{=}N$) is the
endpoint where every candidate is its own cell ($\approx K_{\mathrm{cands}}$
queries).

\begin{table}[tbp]
\centering
\caption{Structural active-cell cost (e5-small, $d_{\mathrm{pub}}=d/4$).
The last two columns retain the earlier one-$0.21$~MB-ciphertext model for
comparison; Table~\ref{tab:shard-ckks} reports actual serialized traffic.}
\label{tab:shard-cost}
\small
\begin{tabular}{@{}lrrrr@{}}
\toprule
$K_{\mathrm{cands}}$ & \multicolumn{2}{c}{active cells (mean / p95)} & upload (MB) & $\times$ baseline\\
\cmidrule(lr){2-3}
 & $C{=}64$ & $C{=}256$ & ($C{=}256$) & ($C{=}256$)\\
\midrule
$40$  & $7.2 / 13$  & $11.6 / 19$ & $2.4$ & $11.6\times$\\
$100$ & $11.3 / 19$ & $20.0 / 32$ & $4.2$ & $20.0\times$\\
$200$ & $15.7 / 26$ & $30.1 / 48$ & $6.3$ & $30.1\times$\\
\bottomrule
\end{tabular}
\end{table}

\paragraph{Measured CKKS and block-SIMD packing.}\label{ssec:shard-ckks}
Experiment~26 implements the residual reranker with real CKKS operations in
TenSEAL/SEAL~\cite{ref_tenseal}, using cached multilingual-e5 SciFact
document/query embeddings for both encoder sizes. It is a systems benchmark
on this real embedding geometry, not a claim of ten-dataset CKKS coverage.
The public server context contains the public and Galois keys but no secret
key. We use $N_{\mathrm{poly}}=8192$, a
$[60,40,40,60]$ coefficient-modulus chain and scale $2^{40}$; the provisioned
server context is $34.08$~MB and is excluded from per-query traffic. A width
$B$ layout concatenates $B$ active-cell query blocks in one ciphertext. For a
candidate in block $b$, the plaintext operand is zero outside that block, so
one real ciphertext--plaintext dot product returns its residual contribution.
The implementation adds the public-prefix score as a plaintext scalar to that
encrypted response. The client then decrypts one encrypted full score per
candidate.

The benchmark contains 315 measured trials over two encoders,
$K_{\mathrm{cands}}\in\{16,64,128\}$, three seeds, five repetitions and every
layout that fits the $4096$ CKKS slots. Each timing includes client transform,
packing, encryption and serialization; server deserialization, plaintext-mask
construction, ct--pt evaluation and response serialization; and client
deserialization and decryption. Table~\ref{tab:shard-ckks} gives the most
demanding $K=128$ cells. Across the whole grid the maximum score error is
$2.29\times10^{-6}$, there are no top-1 flips, top-10 overlap never falls below
$1.000$, and the minimum Kendall $\tau$ is $0.999754$. The largest p95
point-sampled RSS increment is $46.8$~MB (e5-base, $K=128$, width~1); this is
an approximate sampled increment rather than a high-frequency peak trace.

\begin{table}[tbp]
\centering
\caption{Actual CKKS reranking at $K_{\mathrm{cands}}=128$ (15 trials per
row). Traffic is the median serialized decimal MB; latency is the in-process
online-path p50/p95. ``ct'' is the median number of uploaded query
ciphertexts.}
\label{tab:shard-ckks}
\small
\begin{tabular}{@{}llrrrrr@{}}
\toprule
Encoder & pack $B$ & query ct & upload & response & p50 (s) & p95 (s) \\ \midrule
e5-small & $1$ & $30$ & $10.03$ & $30.13$ & $2.415$ & $2.504$ \\
e5-small & $8$ & $4$  & $1.34$  & $30.13$ & $3.049$ & $3.180$ \\
e5-base  & $1$ & $31$ & $10.36$ & $30.13$ & $2.655$ & $2.741$ \\
e5-base  & $4$ & $8$  & $2.67$  & $30.13$ & $3.040$ & $3.136$ \\
\bottomrule
\end{tabular}
\end{table}

\begin{figure}[tbp]
\centering
\includegraphics[width=0.98\linewidth]{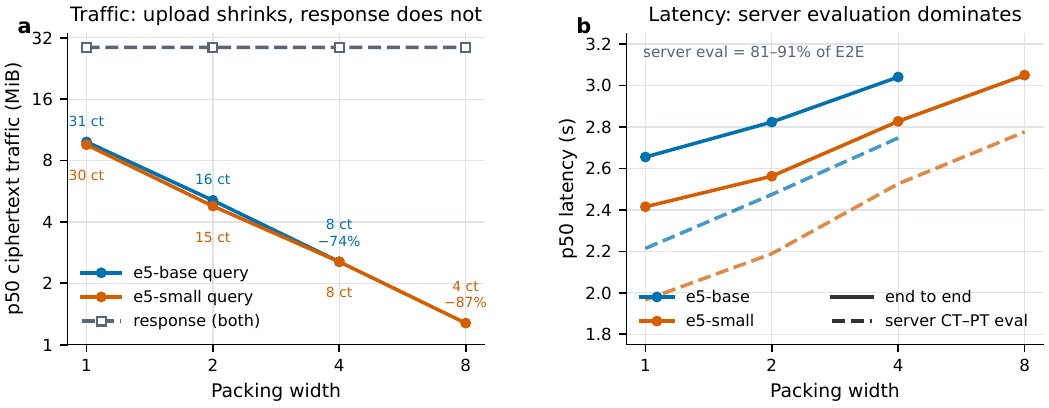}
\caption{Measured block-SIMD trade-off at $K=128$. Packing cuts query upload
by $74.2\%$ (e5-base, width~4) and $86.7\%$ (e5-small, width~8), but the
separate encrypted response per candidate remains $30.13$ decimal MB
($28.7$~MiB on the axis). TenSEAL's
dot reduction over the larger logical vector increases p50 latency by
$14.5\%$ and $26.3\%$, respectively.}
\label{fig:shard-ckks}
\end{figure}

This result answers the implementation question more sharply than a byte
model could. Block-SIMD packing is implemented and materially reduces upload
and client encryption, but it is not a latency win in this layout: server
evaluation occupies $81$--$91\%$ of end-to-end time, while one response
ciphertext is still emitted for every candidate. Packed multi-score output,
real network RTT/TLS, concurrent clients and an ANN service are therefore
systems work still to be done; they are not silently folded into these local
numbers.

\paragraph{Choosing $C$ from measured recovery curves.}
The earlier draft derived a supposedly secure cell count by assuming that an
attack fails until a cell contains $d_{\mathrm{priv}}$ anchors. Experiment~24
rejects that premise: partial estimators exploit a lower-rank anchor span.
There is therefore no bandwidth-optimal secure $C$ in this model. A
deployment can only choose an empirical operating point by comparing the
active-cell cost in Table~\ref{tab:shard-cost} with an acceptable recovery
curve under a specified disclosure budget. Increasing $C$ dilutes a diffuse
anchor stream; it does not stop an attacker who can place anchors in one
target cell.

\subsection{SHARD under partial alignment}\label{ssec:shard-align}

We repeat the residual-only known-pair experiment without the artificial
full-rank gate. An observer receives $m$ nested pairs $(r_i,z_i)$, fits a
reverse map in every target cell containing at least one pair, and matches
$256$ held-out targets against a $10\,000$-document native residual gallery.
The e5-small pool has $100\,000$ documents, $d_{\mathrm{pub}}=96$ and
$d_{\mathrm{priv}}=288$; we evaluate $C\in\{1,16,64,256\}$ over three
key/disclosure seeds. Confidence intervals use a hierarchical
seed-then-target bootstrap with $500$ repetitions.

The methods are rank-deficient orthogonal Procrustes, a Moore--Penrose
minimum-norm linear map, ridge with both zero and cross-validated
regularisation, and the polar projection of the linear map. All methods run
for every positive in-cell anchor count. This is a stronger and fairer game
than declaring an untouched keyed vector whenever the anchor matrix is not
full rank.

\begin{proposition}[Identifiable anchor span]\label{prop:anchor}
Let $z_j=H r_j$ for an unknown $H\in O(d_{\mathrm{priv}})$ and let
$\mathcal S=\operatorname{span}\{r_1,\ldots,r_m\}$ have rank $s$. The pairs
determine the action of $H$ on $\mathcal S$, but leave an arbitrary
orthogonal extension on $\mathcal S^\perp$. Thus exact identification of a
generic full key requires $s=d_{\mathrm{priv}}$, whereas the minimum-norm
reverse map applied to a new target recovers its projection
$P_{\mathcal S}r$. For an isotropic target,
\[
 \frac{\mathbb E\lVert P_{\mathcal S}r\rVert_2^2}
      {\mathbb E\lVert r\rVert_2^2}=\frac{s}{d_{\mathrm{priv}}}.
\]
\end{proposition}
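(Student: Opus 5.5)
The proof has three parts, taken in order: identifiability on $\mathcal S$, non-identifiability on $\mathcal S^\perp$, and the behaviour of the minimum-norm map together with the variance ratio.

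\emph{Identifiability on $\mathcal S$.} Stack the anchors as columns, $R=[r_1,\dots,r_m]$ and $Z=[z_1,\dots,z_m]=HR$. Any $v\in\mathcal S$ can be written $v=Ra$ for some coefficient vector $a$. Then $Hv=HRa=Za$ is computable from the observed pairs. It is well defined: if $Ra=Ra'$, then $Za-Za'=H R(a-a')=0$. So $H|_{\mathcal S}$ is determined by the data.

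\emph{Non-identifiability on $\mathcal S^\perp$.} Take any $Q\in O(d_{\mathrm{priv}})$ that fixes $\mathcal S$ pointwise; equivalently, $Q=P_{\mathcal S}+Q'$ with $Q'$ an arbitrary orthogonal map of $\mathcal S^\perp$. The matrix $\tilde H=HQ$ is orthogonal and satisfies $\tilde Hr_j=z_j$ for every $j$. Since $H$ preserves inner products, it sends $\mathcal S^\perp$ onto $(H\mathcal S)^\perp$, so the consistent keys are exactly $H\,O(\mathcal S^\perp)$. When $s<d_{\mathrm{priv}}$ this set has more than one element (already a reflection on $\mathcal S^\perp$ gives a second one). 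Hence a unique key requires $s=d_{\mathrm{priv}}$, which is the sense of ``generic full key''. Conversely, when $s=d_{\mathrm{priv}}$, $R$ has full row rank and $H=ZR^{+}$.

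\emph{Minimum-norm reverse map.} The estimator is $\hat W=RZ^{+}$, the minimum-Frobenius-norm solution of $WZ=R$. Apply it to a new target $z=Hr$. Using $Z=HR$ with $H$ orthogonal, $Z^{+}=R^{+}H^\top$; this holds because left-multiplying by an orthogonal matrix commutes with the pseudoinverse, which one checks against the four Penrose conditions. Then
\[
\hat W z=RR^{+}H^\top Hr=RR^{+}r=P_{\mathcal S}r,
\]
since $RR^{+}$ is the orthogonal projector onto $\operatorname{range}(R)=\mathcal S$. The pseudoinverse identity is the only step needing care, and I expect it to be the main obstacle. Verifying the Penrose conditions for $R^{+}H^\top$ settles it, using $H^\top H=I$.

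\emph{Variance ratio.} Interpret ``isotropic'' as $\mathbb E[rr^\top]=\sigma^2 I$, with $\mathcal S$ fixed, or independent of $r$ with the expectation taken conditionally. Then
\[
\mathbb E\lVert P_{\mathcal S}r\rVert^2=\operatorname{tr}\!\big(P_{\mathcal S}\,\mathbb E[rr^\top]\big)=\sigma^2 s,
\qquad
\mathbb E\lVert r\rVert^2=\sigma^2 d_{\mathrm{priv}}.
\]
Their ratio is $s/d_{\mathrm{priv}}$, as claimed.
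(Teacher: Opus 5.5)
Your proof is correct and follows the paper's argument: the pairs fix an isometry $\mathcal S\to H\mathcal S$, and any orthogonal extension on $\mathcal S^\perp$ is consistent with them. The pseudoinverse identity for an orthogonally transformed anchor matrix turns the minimum-norm reverse map into $P_{\mathcal S}$, and the isotropic ratio is the trace identity. The differences are cosmetic. You stack anchors as columns ($Z=HR$, $\hat W=RZ^{+}$) where the paper stacks rows ($Z=RH^\top$, $Z^{+}R=HP_{\mathcal S}$). You also supply details the paper leaves implicit: the well-definedness check, the explicit set $H\,O(\mathcal S^\perp)$ of consistent keys, the Penrose verification, and the second-moment reading of ``isotropic''.
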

\begin{proof}
The paired vectors define an isometry between $\mathcal S$ and
$H\mathcal S$. Any two orthogonal maps that agree on $\mathcal S$ and differ
only on $\mathcal S^\perp$ produce the same observed pairs. Writing the row
matrices as $Z=R H^\top$, the minimum-norm least-squares reverse map is
$Z^+R=H P_{\mathcal S}$, hence a new row $z^\top=r^\top H^\top$ maps to
$r^\top P_{\mathcal S}$. The isotropic expectation is the trace identity
$\mathbb E[r^\top P_{\mathcal S}r]/\mathbb E[r^\top r]=s/d$.
\end{proof}

The proposition separates two questions that the previous analysis
conflated. A complete key remains underdetermined below full rank, but record
matching may need only a discriminative projection.
Table~\ref{tab:shard-align} makes the distinction stark. Minimum-norm OLS
exceeds $0.9$ R@1 with only about $32$--$36$ anchors in the average target
cell, while full-rank coverage is zero in every row. The required
\emph{global} budget still grows approximately with $C$ under diffuse
disclosure ($32,512,2048,8192$), so compartmentalisation is real; its
absolute security interpretation is much weaker than the former
$d_{\mathrm{priv}}$-anchors-per-cell claim.

\begin{table}[tbp]
\centering
\caption{Strongest partial-alignment result (minimum-norm OLS, e5-small).
The reported global $m$ is the first tested point with mean R@1 at least
$0.9$. CIs are hierarchical $95\%$ intervals. Exact full-key rank is absent
at every operating point even though record matching is already reliable.}
\label{tab:shard-align}
\small
\begin{tabular}{@{}rrrrrr@{}}
\toprule
$C$ & global $m$ & mean anchors/cell & R@1 [95\% CI] &
recovered cosine & full-rank coverage \\ \midrule
$1$   & $32$     & $32.0$ & $0.988\,[0.979,0.996]$ & $0.356$ & $0.000$ \\
$16$  & $512$    & $35.8$ & $0.939\,[0.921,0.957]$ & $0.394$ & $0.000$ \\
$64$  & $2\,048$ & $34.8$ & $0.953\,[0.932,0.970]$ & $0.412$ & $0.000$ \\
$256$ & $8\,192$ & $33.7$ & $0.961\,[0.944,0.976]$ & $0.443$ & $0.000$ \\
\bottomrule
\end{tabular}
\end{table}

\begin{figure}[tbp]
\centering
\includegraphics[width=\linewidth]{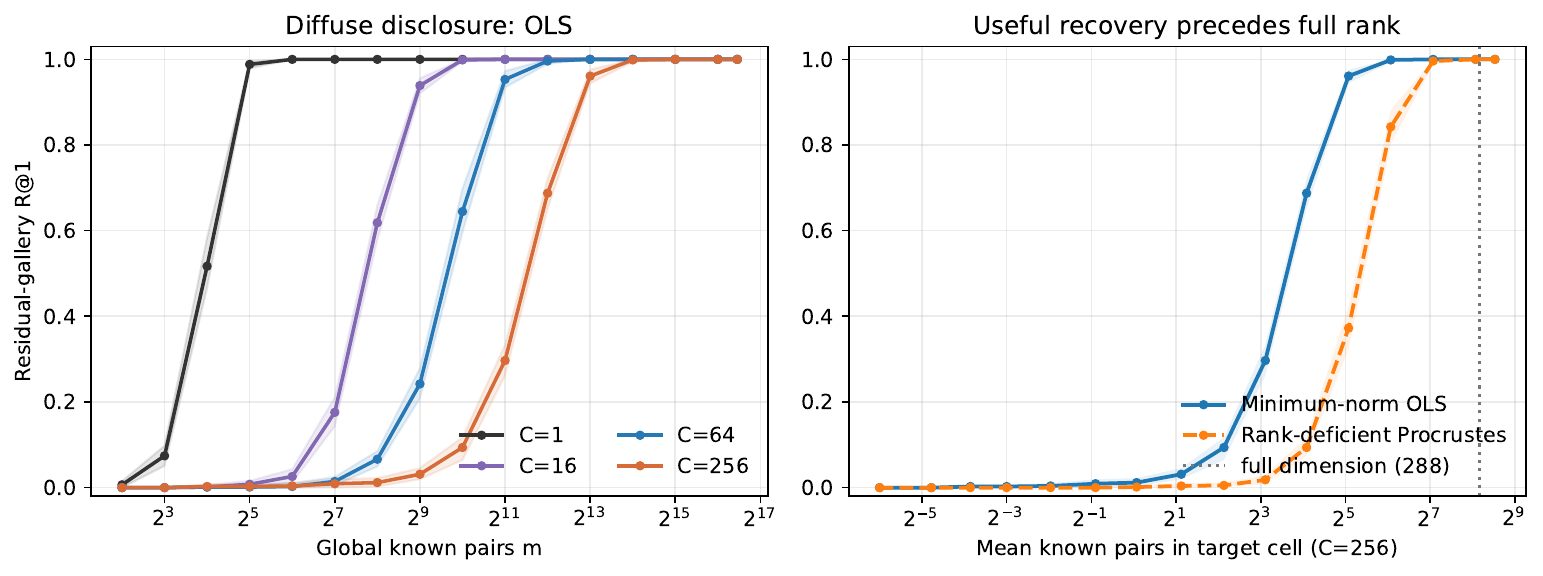}
\caption{Partial residual alignment without a full-rank gate. Minimum-norm
OLS extracts the observed anchor span and reaches high gallery R@1 far before
the complete cell key is identifiable. Increasing $C$ shifts the global
disclosure curve by spreading anchors across cells; it does not create a hard
threshold.}
\label{fig:shard-align}
\end{figure}

\paragraph{Attacker comparison and targeted disclosure.}\label{ssec:shard-learned}
Ridge with $\alpha=0$ is numerically identical to the Moore--Penrose result;
cross-validated ridge differs by less than $2\times10^{-6}$. Orthogonal
Procrustes and the polar projection are weaker below full rank but still
reach R@1 above $0.9$ at roughly $128$ in-cell anchors: global
$m=128,2\,048,8\,192,32\,768$ for $C=1,16,64,256$, respectively, again with
zero full-rank coverage at the threshold. The earlier fixed
$\operatorname{Ridge}(\alpha=1)$ comparison therefore did not establish that
Procrustes was the strongest observer, and that claim is withdrawn.
Likewise, a covariance-matching control is not a full implementation of an
unsupervised translation attack.

For a targeted observer, only the in-cell count matters: the OLS rows above
indicate that approximately $32$--$36$ well-spread pairs in the victim cell
already give high residual-only R@1 in this gallery, independently of the
global cell count. The experiment excludes the unchanged public prefix, norm
and cross-release Gram channels; Section~\ref{ssec:shard-micro} shows that the
full server view is easier to link.

\paragraph{Checkpoint-compatible text outcome under a strengthened observer.}
\label{ssec:shard-text}
Embedding-space R@1 does not by itself say whether the recovered vector yields
readable text. Experiment~29 therefore sends actual SHARD-derived views to the
official public GTR-base \textsc{Vec2Text} inversion/correction checkpoints
\cite{ref_gtr,ref_vec2text}, using their mask-aware mean-pooling path rather
than the incorrect CLS extraction used by an earlier diagnostic. The corpus
contains $2{,}400$ cached AG News records and $600$ controlled synthetic
records, truncated to the checkpoint's 32-token input. We fit a full
$768$-dimensional PCA and evaluate $(d_{\mathrm{pub}},C)=(192,16)$ and
$(384,32)$, two independent cell keys per geometry, 12 targets from the
globally largest public cell, eight correction steps, and nested
$m\in\{8,16,32,64\}$ in-cell pairs. The views are raw, prefix-only, a keyed
residual treated as if its key were the identity, exact-key oracle,
minimum-norm OLS, and rank-deficient Procrustes.

This is deliberately a strengthened, checkpoint-compatible stress test. The
observer is granted the exact PCA basis $V$ and mean $\mu$, leaving only the
cell key $H_c$ unknown or partially learned; it is therefore stronger than the
honest server in our primary threat model and isolates the outcome of residual
key recovery. The official corrector expects native unnormalised mean-pooled
GTR vectors (mean norm $1.047$ here), so this case is not the literal
unit-normalised e5 retrieval instance. Raw and exact-key oracle are identical
geometry controls, not upper bounds on a non-monotone neural decoder. The
intervals are conditional, descriptive bootstraps over two fixed geometry
designs and their targets, not population-level significance claims.

Table~\ref{tab:shard-text} gives the globally largest-cell outcome. Averaged
equally over the two geometry designs, the unknown-key view lowers token-F1
from $0.665$ for the exact-geometry control to $0.242$. This is the positive
no-anchor effect that survives the stronger raw control. It is not isolation:
the public prefix alone retains F1 $0.330$ at $d_{\mathrm{pub}}=192$ and
$0.536$ at $d_{\mathrm{pub}}=384$. With only eight in-cell pairs,
minimum-norm OLS reaches $0.343$/$0.557$, essentially returning to the
prefix-only outcome. Increasing the pair budget improves geometric cosine but
does not make neural decoding monotone; at $m=64$, F1 is
$0.373$/$0.516$. Rank-deficient Procrustes is weaker in this grid.

\begin{table}[H]
\centering
\caption{Checkpoint-compatible GTR text outcome in the globally largest
public cell (12 AG News targets per geometry, two keys). Each cell reports
input cosine / token-F1 / BLEU. Values are descriptive means; raw and
exact-key oracle are numerically identical, so they share one row.}
\label{tab:shard-text}
\small
\begin{tabular}{@{}llcc@{}}
\toprule
Observer view & $m$ & $d_{\mathrm{pub}}{=}192$, $C{=}16$ & $d_{\mathrm{pub}}{=}384$, $C{=}32$ \\ \midrule
raw / exact key & --- & $1.000/.670/.207$ & $1.000/.661/.214$ \\
prefix only & --- & $.888/.330/.042$ & $.985/.536/.094$ \\
unknown key & --- & $.789/.152/.020$ & $.971/.331/.047$ \\
minimum-norm OLS & $8$ & $.892/.343/.050$ & $.986/.557/.114$ \\
minimum-norm OLS & $64$ & $.916/.373/.049$ & $.990/.516/.107$ \\
Procrustes & $64$ & $.839/.192/.022$ & $.979/.410/.062$ \\
\bottomrule
\end{tabular}
\end{table}

A separately labelled, evaluator-selected cell makes the controlled PII
diagnostic measurable; it is not an attacker-discoverable primary cohort.
Across its duplicated key releases, exact repeated-name recall is $.792$ for
raw, $.458$ for the unknown-key view and $.708$ for OLS at $m=8$. These names
are not unique (307 values among 600 records), and some target names also occur
among anchors. More importantly, exact recovery of unique visible email
addresses is $0/48$ and of phone numbers $0/34$ for \emph{every} view,
including raw and exact-key controls; card numbers have no visible denominator
after truncation. The PII diagnostic therefore lacks a positive unique-ID
control and supports no PII-protection claim. It is reported to prevent a
favourable token-F1 result from being relabelled as sensitive-attribute
security.

\FloatBarrier
\subsection{SHARD: the public channel leaks coarse structure}\label{ssec:shard-leak}

The baseline's public PQ index sits on the full protected space and
preserves $67\%$ of exact top-10 neighbours (Section~\ref{ssec:alignment-pq}).
\textsc{Shard}'s public channel is only the short prefix $u$. We measure how
much true (full-space) top-10 neighbour structure each public channel
reveals (NN-overlap@10, e5-small, $100\,000$ documents, $1000$ probes;
Fig.~\ref{fig:shard-leak}). A $d/4$ prefix reveals NN-overlap $0.55$ and a
$d/8$ prefix only $0.36$ (and $0.20$ at $d/16$), against $0.76$ for the
baseline's exposed SVD $k{=}d/2$ geometry. The public stage-1 channel thus
discloses coarse/topic neighbours but not fine identity, which lives in the
keyed residual. The pattern reproduces on e5-base: a $d/8$ prefix leaks
NN-overlap $0.58$ and a $d/4$ prefix $0.77$, against $0.94$ for the
baseline geometry; the within-cell fraction is $0.55$/$0.46$ at
$C{=}64$/$256$.

\begin{figure}[tbp]
\centering
\includegraphics[width=0.66\linewidth]{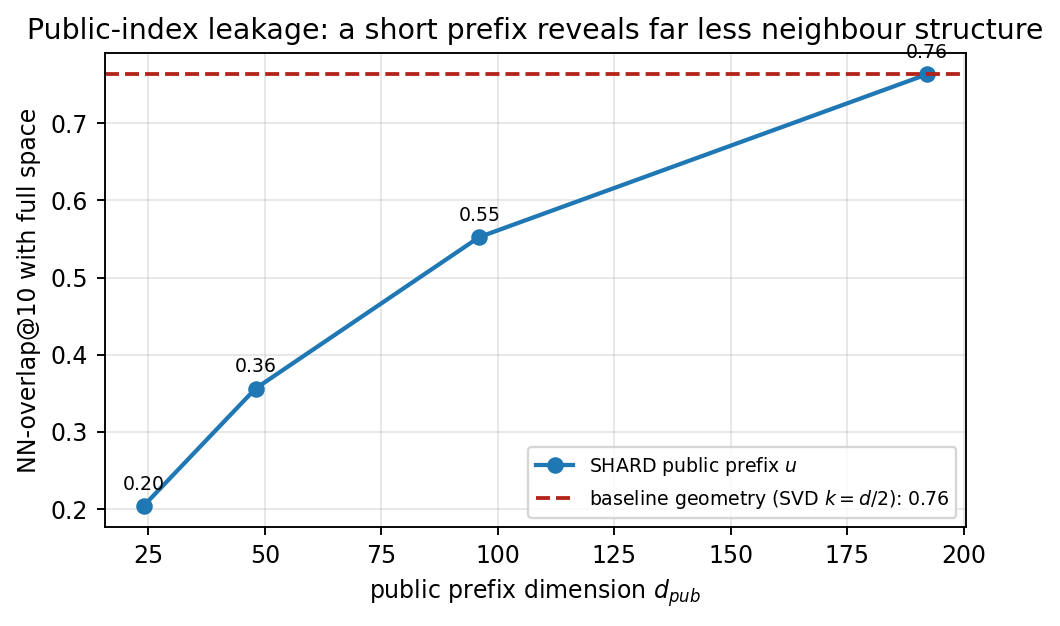}
\caption{Public-index leakage (e5-small). The \textsc{Shard} public prefix
$u$ reveals far less of the true top-10 neighbour structure than the
baseline's exposed SVD $k{=}d/2$ geometry; a $d/8$ prefix discloses about
half as much.}
\label{fig:shard-leak}
\end{figure}

We are explicit about the residual leak. Within a cell the keys cancel, so
the server can compute exact same-cell similarities; the fraction of a
document's true top-10 neighbours that fall in its own cell---hence remain
recoverable---is $0.50$ at $C{=}64$ and $0.41$ at $C{=}256$, against $1.00$
for the baseline's single global key (which cancels for \emph{all} pairs).
More cells localise the leak further, and the micro-key variant removes it
entirely. \textsc{Shard} therefore shrinks, but does not zero, the
neighbour graph available to an honest-but-curious server.

\paragraph{Baseline: a PQ-only hardened store.} A natural alternative to
cell-keying is to publish product-quantization codes of the rotated store
and rerank from them---a \emph{PQ-only} defense. We compare it on the same
data and leakage metric (Table~\ref{tab:shard-pq-only}); the result is an
honest trade-off, not a clean win. At small code sizes PQ-only leaks
\emph{less} coarse neighbour structure than \textsc{Shard}'s $d/4$ prefix
(NN-overlap@10 $0.27$--$0.63$ vs.\ $0.55$--$0.77$), because lossy
quantization blurs the high-variance directions the prefix exposes. But that
loss is the catch: PQ-only reranks from a reconstruction of cosine only
$0.67$--$0.83$ to the true vector, so retrieval quality degrades, whereas
\textsc{Shard}'s per-cell orthogonal keys cancel and rerank \emph{exactly}
(cosine $1.000$)---the raw-nDCG recovery of Section~\ref{ssec:shard-utility}
that PQ-only cannot match. The two are different points on the
privacy/utility frontier: PQ-only buys a smaller footprint and lower coarse
leak at a retrieval-quality cost; \textsc{Shard} preserves the plaintext score
and instead localises the residual leak to cells (and removes it at the
micro-key limit, Section~\ref{ssec:shard-micro}).

\begin{table}[tbp]
\centering
\caption{A PQ-only hardened store vs.\ \textsc{Shard} ($100\,000$ documents,
$1000$ probes). Rerank fidelity is the cosine of the channel used for
scoring to the true vector (\textsc{Shard}'s cell-keyed residual cancels
exactly; PQ-only is lossy); NN-overlap@10 is the full-space top-10 structure
recoverable from the public channel.}
\label{tab:shard-pq-only}
\small
\begin{tabular}{@{}llrr@{}}
\toprule
Encoder & Public channel & rerank fidelity & NN-overlap@10 \\ \midrule
e5-small & \textsc{Shard} $d/4$ prefix & plaintext ($1.000$) & $0.55$ \\
e5-small & PQ-only $M{=}24$ ($24$\,B) & $0.67$ & $0.27$ \\
e5-small & PQ-only $M{=}48$ ($48$\,B) & $0.83$ & $0.49$ \\ \midrule
e5-base & \textsc{Shard} $d/4$ prefix & plaintext ($1.000$) & $0.77$ \\
e5-base & PQ-only $M{=}48$ ($48$\,B) & $0.67$ & $0.41$ \\
e5-base & PQ-only $M{=}96$ ($96$\,B) & $0.83$ & $0.63$ \\
\bottomrule
\end{tabular}
\end{table}

\paragraph{Micro-keys remove shared residual edges, not cross-release linkage.}\label{ssec:shard-micro}

Per-document keys do eliminate shared-key residual comparisons: the fraction
of a document's true residual top-10 neighbours directly recoverable from
same-key products falls from $0.25$ at $C{=}64$ and $0.18$ at $C{=}256$ to
$0.00$. That within-release result must not be confused with unlinkability.

Experiment~25 creates two independently re-keyed snapshots of the same
e5-small records, independently permutes row handles, removes persistent
document identifiers, and evaluates both a residual-only view and the full
server view. At $N=10{,}000$ the earlier raw cross-key cosine is indeed at
chance, but stronger invariant features reverse the conclusion
(Table~\ref{tab:shard-micro}). For $C{=}64$, monotone one-to-one assignment
by residual norm links $99.6\%$ of records and a permutation-invariant
within-cell Gram signature links $99.9\%$. For micro-keys the Gram channel
vanishes, yet norm assignment still links $99.4\%$. The unchanged prefix
links $99.9\%$ in both cases. Cell-level Gram matching remains above $99.9\%$
under fp16, int8 and a diagnostic $1\%$ Gaussian perturbation. These are
same-population snapshot results without insert/delete churn, not a universal
linkage benchmark, but they decisively reject the former unlinkability claim.

\begin{table}[tbp]
\centering
\caption{Cross-release linkage after independent row permutation
(e5-small, clean snapshots, $N=10{,}000$, mean over three seeds). The weak
cross-key cosine control is near chance, but orthogonal invariants and the
unchanged prefix link almost every record.}
\label{tab:shard-micro}
\small
\begin{tabular}{@{}llrr@{}}
\toprule
Keying & Linkage method & top-1 & AUC \\ \midrule
cell, $C{=}64$ & cell-label chance & $0.0064$ & $0.500$ \\
cell, $C{=}64$ & raw cross-key cosine & $0.0057$ & $0.501$ \\
cell, $C{=}64$ & norm-rank assignment & $0.9957$ & --- \\
cell, $C{=}64$ & Gram signature & $0.9993$ & $1.000$ \\
cell, $C{=}64$ & unchanged prefix & $0.9993$ & $1.000$ \\ \midrule
micro-key & global chance & $0.0001$ & $0.500$ \\
micro-key & raw cross-key cosine & $0.0003$ & $0.497$ \\
micro-key & norm-rank assignment & $0.9943$ & --- \\
micro-key & unchanged prefix & $0.9993$ & $1.000$ \\
\bottomrule
\end{tabular}
\end{table}

Experiment~28 makes this game less artificial by giving each release
$N\in\{2{,}000,10{,}000\}$ rows but retaining only $25\%$, $50\%$, $75\%$,
$90\%$ or $100\%$ of the underlying documents. Release-local row order is
random, identifiers and overlap labels are hidden from the attacker, and the
two releases contain disjoint insertion/deletion sets around the common
subset. We repeat the game for e5-small and e5-base, three independent
release pairs, clean float32, fp16 and release-local symmetric int8, using
features standardised only from the two observed marginals. Thus no true pair
or overlap label is used to build the linker.

Churn weakens Gram signatures because a row's within-cell neighbourhood has
changed, but it does not repair the unchanged channels. At $N=10{,}000$ in
clean releases, public-prefix R@1 for persistent rows is at least $0.9997$ and
residual-norm R@1 at least $0.9996$ at every overlap. Even at $50\%$ overlap,
fp16 norm linkage reaches $0.9632$/$0.9782$ for e5-small/e5-base. Int8 lowers
those values to $0.4505$/$0.5522$, but the corresponding pairwise AUC remains
$0.9930$/$0.9951$. Quantisation therefore makes exact scalar assignment less
stable without erasing the signal.

\begin{table}[tbp]
\centering
\caption{Common-document R@1 under insert/delete churn ($N=10{,}000$, clean,
three-seed mean). Values are the combined norm/Gram/neighbour signature;
parentheses give Gram alone. Prefix and norm remain at R@1$\ge0.9996$ for all
clean overlap conditions.}
\label{tab:shard-churn}
\small
\begin{tabular}{@{}llrrrrr@{}}
\toprule
Encoder & keying & $25\%$ & $50\%$ & $75\%$ & $90\%$ & $100\%$ \\ \midrule
e5-small & cell & $.148\,(.134)$ & $.315\,(.285)$ & $.650\,(.604)$ & $.913\,(.884)$ & $1.000\,(1.000)$ \\
e5-base  & cell & $.140\,(.134)$ & $.300\,(.273)$ & $.638\,(.585)$ & $.913\,(.875)$ & $1.000\,(1.000)$ \\
e5-small & micro & $.016\,(.015)$ & $.020\,(.018)$ & $.020\,(.015)$ & $.017\,(.015)$ & $.021\,(.017)$ \\
e5-base  & micro & $.019\,(.017)$ & $.021\,(.019)$ & $.019\,(.017)$ & $.020\,(.018)$ & $.021\,(.018)$ \\
\bottomrule
\end{tabular}
\end{table}

\begin{figure}[tbp]
\centering
\includegraphics[width=0.96\linewidth]{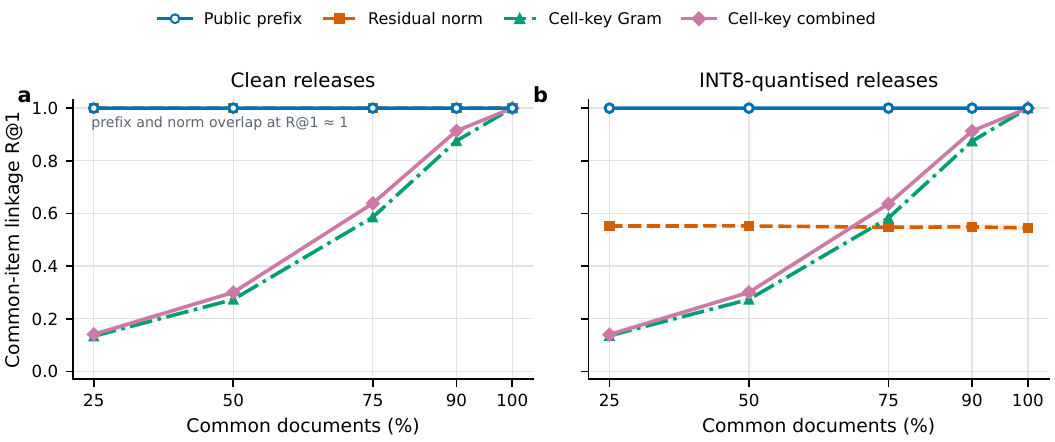}
\caption{Linkage of persistent e5-base records across independently re-keyed
$N=10{,}000$ releases. Cell-key Gram structure weakens at low overlap and
recovers as the common population grows; the stable prefix remains almost
perfect. Int8 reduces exact norm R@1 but leaves a strong pairwise signal.
Bands show one standard deviation over release pairs.}
\label{fig:shard-churn}
\end{figure}

An open-set, threshold-free mutual-nearest-neighbour rule applied to the
public-prefix distance makes the churn effect visible in a different way. Its
common-row recall stays essentially one, while precision is only
$0.583$/$0.616$ at $25\%$ overlap and rises to about $0.996$ at $90\%$. The
low-overlap errors are false reciprocal matches
among churn-only rows, not protection of persistent documents. Per-document
micro-keys do destroy cross-release Gram invariance (roughly $0.016$--$0.021$
R@1), yet they still leave the prefix and residual norm. The expanded game
therefore narrows the earlier statement---Gram linkage depends on population
overlap---while strengthening the central conclusion that SHARD releases are
not unlinkable. Because clean and fp16 persistent prefixes also permit a
near-zero distance threshold, this threshold-free MNN diagnostic is
conservative for the unchanged-prefix channel.

\subsection{A formally calibrated Gaussian release}\label{ssec:shard-vs-dp}

The earlier isotropic-noise sweep remains only a distortion diagnostic: it
had no adjacency relation, clipping rule or privacy accountant. Experiment~27
replaces that misuse of terminology with an explicit one-shot mechanism.
Databases have fixed size and are adjacent when one row is replaced; row
participation and identifiers are public. Every embedding is clipped to
$B=1.000001$, a fixed bound from the encoder's unit-normalisation contract
rather than a corpus estimate. The concatenated release therefore has global
$L_2$ sensitivity $\Delta_2=2B=2.000002$. Independent Gaussian noise is added
to every coordinate and each row is then unit-normalised as post-processing.
For $\delta=10^{-6}$, $\sigma$ is obtained from the exact analytic Gaussian
privacy-loss equation of Balle and Wang~\cite{ref_analytic_gaussian}; all five
executable calibration checks recover the requested $\delta$ to numerical
precision.
Writing $\Phi$ for the standard-normal CDF, the solved equation is
\[
\delta=\Phi\!\left(\frac{\Delta_2}{2\sigma}
                 -\frac{\varepsilon\sigma}{\Delta_2}\right)
-e^{\varepsilon}\Phi\!\left(-\frac{\Delta_2}{2\sigma}
                 -\frac{\varepsilon\sigma}{\Delta_2}\right).
\]

This gives central $(\varepsilon,\delta)$-DP for document-vector content under
the stated replacement adjacency when a trusted curator publishes the joint
release. The same per-record calibration has a local interpretation only if
each record owner clips and perturbs its own vector. It is not add/remove
membership privacy, does not protect queries, and repeated independent
releases compose. Centroids or PCA fitted on private release data would need
their own mechanism and privacy accounting; here preprocessing parameters are
public or independently trained.

We evaluate eight real BEIR/MIRACL encoder--dataset cases, 17 finite
$\varepsilon$ values, three noise seeds, every judged query, and 256 linkage
targets per case against a clean native gallery of up to $5{,}000$ rows after
identifiers are removed. Confidence intervals use 2,000 hierarchical
bootstrap repetitions. Table~\ref{tab:formal-dp} summarises three operating
points. At $\varepsilon=1$, nDCG@10 is at most $0.011$. By
$\varepsilon=512$, native-gallery linkage is already $0.962$--$0.995$ while
nDCG@10 is only $0.017$--$0.062$. A strict utility match to corrected SHARD,
defined by $|\Delta\mathrm{nDCG@10}|\le0.01$ and
$|\Delta\mathrm{Recall@100}|\le0.02$, occurs in only three of eight cases on
the finite grid, all at $\varepsilon=32768$ and all with linkage R@1 at least
$0.995$.

\begin{table}[tbp]
\centering
\caption{Analytic Gaussian release at $\delta=10^{-6}$. Metric ranges are
across eight evaluated cases and are rounded to three decimals; ``matches''
counts simultaneous matches to the corrected SHARD nDCG@10 and Recall@100
targets on the measured grid.}
\label{tab:formal-dp}
\small
\begin{tabular}{@{}rrrrrr@{}}
\toprule
$\varepsilon$ & $\sigma$ & nDCG@10 & Recall@100 & link R@1 & matches \\ \midrule
$1$ & $8.4494$ & $.000$--$.010$ & $.000$--$.028$ & $.000$ & $0/8$ \\
$512$ & $.07240$ & $.017$--$.062$ & $.071$--$.244$ & $.962$--$.995$ & $0/8$ \\
$32768$ & $.007959$ & $.297$--$.675$ & $.250$--$.964$ & $.995$--$1.000$ & $3/8$ \\
\bottomrule
\end{tabular}
\end{table}

\begin{figure}[tbp]
\centering
\includegraphics[width=0.98\linewidth]{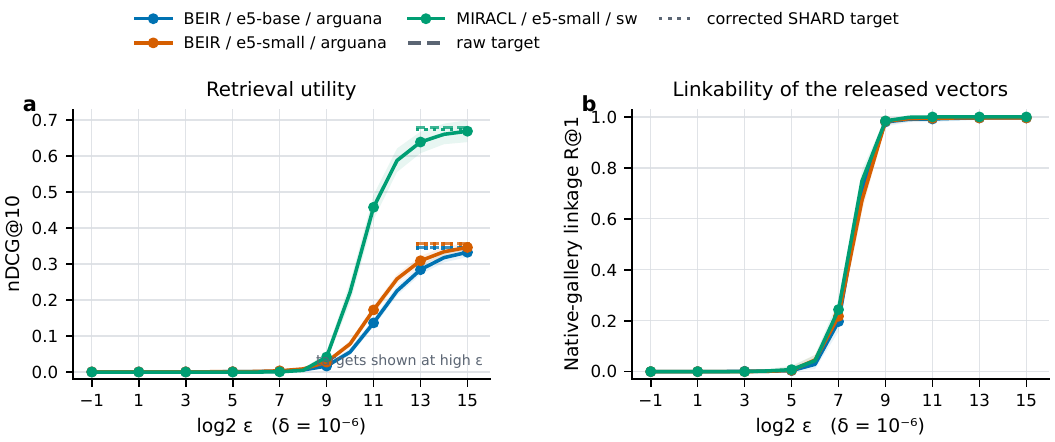}
\caption{Utility and native-gallery linkability of the formal Gaussian
release. Representative datasets are selected from non-private utility before
examining any finite-$\varepsilon$ result. Strong noise destroys retrieval;
as utility returns, clean-reference linkage rises sharply.}
\label{fig:formal-dp}
\end{figure}

The comparison is deliberately limited to retrieval quality. The Gaussian
baseline scans the exact full corpus, whereas corrected SHARD uses two-stage
$K=200$ routing; latency and traffic are therefore not matched. More
importantly, SHARD is not DP, and an empirical linkage rate is not a privacy
budget. The useful conclusion is narrower: for this simple per-record
Gaussian release, every strict utility match occurs at
$\varepsilon=32768$ and has native-gallery linkage R@1 at least $0.995$. It is
a formal baseline with a different guarantee, not a proof that geometric
keying dominates differential privacy.

\paragraph{A measured limitation: overlap reference lookup.}\label{ssec:shard-ref}

The baseline allowed an overlapping reference corpus to recover exact
source paragraphs after alignment ($99.8\%$ top-1,
Section~\ref{ssec:reference-attack}). The actual \textsc{Shard} construction
does not key the prefix: $u_i$ is the stable stage-1 representation. Thus an
attacker with an overlapping native reference does not need to recover a
prefix key at all. The earlier optional keyed-prefix ablation reached
$0.995$ R@1 after $200$ anchors on a $100\,000$-document e5-small reference;
the unkeyed construction exposes at least as direct a channel, consistent
with the $0.999$ cross-release prefix linkage in Experiment~25. Residual
cell keys and micro-keys do not close it. A coarse-cell-ID public channel or
a client-private routing index could reduce this exposure, but their recall,
storage and onboarding costs are not evaluated here.

\FloatBarrier
\section{Discussion}\label{sec:discussion}

\paragraph{What compartmentalisation buys---and what it does not.} Cell keys
remove one globally shared residual orientation. Under diffuse disclosure,
that shifts the required \emph{global} pair budget because each cell sees
only a fraction of the pairs. Experiment~24 shows the shift directly, but it
also shows why it is not a privacy threshold: minimum-norm OLS uses the
observed anchor span and re-identifies records from roughly $32$--$36$
in-cell pairs, far below $d_{\mathrm{priv}}=288$. Micro-keys remove shared
residual edges, but norms and the prefix remain stable. Thus $C$ is a
compartmentalisation--cost knob, not a security parameter. The positive
result that survives is useful but narrow: corrected scoring preserves raw
ranking, and diffuse evidence accumulates more slowly per cell. The negative
result is equally central: the stored view remains linkable and is not a
cancellable template.

\paragraph{Encoder selection is part of the protocol.} The baseline
protective wrapper is metric-preserving only for the score it actually
implements. Once query centring is corrected, the half-PCA gap is strongly
encoder- and dataset-dependent: it is negligible for e5-base/NFCorpus but
reaches $0.053$ nDCG on MIRACL/Swahili with e5-small. Encoder and projection
width should therefore be selected from corrected, graded relevance metrics
on the deployment distribution; neither dimension nor $\sigma_{rec}$ is a
privacy or utility guarantee.

\paragraph{The role of the rotation $R$: obfuscation, not a primitive.}
We state this as plainly as possible: \emph{the secret rotation is an
empirical obfuscation layer, not a cryptographic primitive}. It
contributes a defence channel orthogonal to SVD truncation, and even at
$k/d = 1.0$ the unknown-$R$ regime reduces off-the-shelf Vec2Text BLEU
to the noise floor---but the paper also reports that a known-$R$ attacker
gains nothing beyond SVD and that a known-plaintext attacker recovers the
orientation with standard Procrustes alignment. The aligned Vec2Text
stress test does not recover text or typed PII, but its raw baseline is
already near the reconstruction floor. Experiment~29 repairs that control in
a separate, checkpoint-compatible GTR case: an unknown cell key lowers
off-the-shelf reconstruction, but the public prefix and a small known-pair span
restore part of it. Because that test grants $V$ and $\mu$, uses native
unnormalised GTR vectors and does not train on SHARD outputs, a corpus-adapted
or universal decoder is still untested. The rotation's value proposition is
therefore limited: it is a cheap obstacle
for a weak unknown-orientation attacker, not a security guarantee. The
scientific weight of the paper is the measured privacy/latency/accuracy
trade-off and the documented failure modes, not the rotation.

\paragraph{What CKKS hides and what it does not.} In the baseline prototype,
CKKS hides the numerical scoring query and approximate score values from an
honest-but-curious evaluator that lacks the secret key. Experiment~26 now
executes that path for SHARD itself and shows that CKKS approximation preserves
the measured order, but implementation evidence does not enlarge the security
claim. CKKS does \emph{not} hide active cells, ciphertext counts, the
identifiers of candidates produced by stage-1 filtering, or the distances
implicit in their order. Section~\ref{ssec:alignment-pq} also shows that public PQ
codes preserve substantial local geometry even before observing queries.
An access-pattern attacker can therefore, over many queries, build a
co-occurrence model of which documents are co-retrieved.
Composing the pipeline with a PIR-style retrieval primitive (Tiptoe
\cite{ref_tiptoe}, SealPIR \cite{ref_sealpir}, OnionPIR
\cite{ref_onionpir}, SimplePIR \cite{ref_simplepir}) is the natural
remedy; ours is engineered to be PIR-friendly because the
ct-pt operations on the server are stateless and cluster-local.

\paragraph{Parameter selection.} In the earlier baseline, the selected configuration is
$\approx 1.7\times$ faster than the TenSEAL stock parameters at the same
parameter-table security bound and within $\Delta\mathrm{Acc@1}\leq 1$~p.p.\
of baseline\_proj. We reiterate that this speed-up is structural---ct-pt
needs one multiplicative level, hence a shorter modulus chain---rather than
a product of the learned surrogate, which only serves to extrapolate
latency to unbenchmarked configurations or hardware. The selection runs
offline at no per-query cost. These timings cannot be transferred directly
to SHARD because its residual dimension, active-cell multiplicity and packing
layout differ.

\section{Limitations and Future Work}\label{sec:limits}

\paragraph{Intrinsic limitations.}
The most important limitation is now measured rather than hypothetical:
cell-local orthogonal keys compartmentalise known pairs but do not create a
de-anonymisation threshold. An observer needs full anchor rank to identify a
generic complete key, yet a minimum-norm map reaches high residual-gallery
R@1 from roughly $32$--$36$ in-cell pairs. A targeted observer can collect
those pairs in one public cell, so increasing the global cell count does not
improve this per-target case.

The stored representation is also linkable across key epochs. The public
prefix is unchanged, every orthogonal transform preserves residual norm, and
cell keys preserve the complete within-cell Gram matrix. Micro-keys remove the
last shared-geometry channel but retain prefix and norm. Experiment~28 extends
the same-population audit to two encoders, partial overlap and insert/delete
churn. It still keeps the embedding of each persistent document fixed; text
edits, stochastic re-embedding and encoder-version drift are not simulated.
The numerical linkage rates should not be extrapolated to every deployment,
but they are sufficient to reject unlinkability and cancellable-template
claims for the stated construction.

The systems evidence is now concrete but still local. Experiment~26 measures
SHARD transforms, CKKS encryption/decryption, ct--pt evaluation,
serialization, point-sampled RSS, throughput, score error and ranking flips.
It excludes preprocessing from online latency and does not include a real
network, TLS, ANN service, concurrent clients or a packed multi-score response.
TenSEAL executes this path on the CPU; the RTX~5060 does not accelerate CKKS.
Most importantly, CKKS does not hide candidate identifiers, active cells,
ciphertext counts or repeated-query access patterns.

The empirical scope remains bounded. The corrected utility audit covers two
encoders and ten BEIR/MIRACL cells, with the largest graded corpus at about
$297\,000$ passages. Partial alignment still uses e5-small residuals, while
the churn audit covers e5-small and e5-base. The learned inversion experiment
uses one GTR/Vec2Text checkpoint family, two fixed geometry designs and twelve
targets per cell. To stay on the corrector's input distribution it uses native
unnormalised GTR vectors and deliberately grants the observer the exact PCA
basis and mean; its bootstrap intervals are conditional descriptions, not
population-wide significance claims. It cannot represent every generative or
universal inverter. Membership inference and non-overlapping semantic
reference lookup are not evaluated. The formal Gaussian release covers one
simple mechanism under public participation and replacement adjacency; it is
not an add/remove membership guarantee and does not privately learn PCA.

\paragraph{Priorities for future work.}
The first systems priority has shifted from implementing block-SIMD to fixing
the response path it exposed. A useful client/server prototype should pack
multiple candidate scores per response, include ANN routing and real
RTT/TLS, and report throughput and tail latency under concurrent load. The
present phase breakdown supplies a reproducible single-machine baseline for
that work rather than a production-service forecast.

The first construction priority is to remove the invariants exposed by
Experiments~25 and~28. One candidate is a well-conditioned asymmetric cell transform
$z_i=A_c r_i$, $t_{q,c}=A_c^{-\top}r_q$, which preserves
$t_{q,c}^\top z_i=r_q^\top r_i$ without necessarily preserving
document--document norms or Gram geometry. This direction is related to
asymmetric scalar-product-preserving encryption and must be positioned as a
systems composition rather than a new cryptographic primitive. Any variant
must be retested under the full server view, multiple release epochs,
text/embedding drift, insert/delete churn and known-pair alignment.

Finally, the learned text experiment should be widened to corpus-adapted and
universal inverters, more encoder families, entity/PII categories, membership
inference and semantic lookup against non-overlapping corpora. On the formal
privacy side, stronger baselines should include private preprocessing,
add/remove membership adjacency and explicit multi-release composition rather
than relabelling empirical non-linkage as DP. Both lines need a million-scale
graded retrieval corpus before deployment-oriented claims are credible.

\section{Conclusion}\label{sec:conclusion}

This study revises both the utility and the security interpretation of
cell-keyed residual splitting. On utility, the decisive correction is simple:
fit PCA on centred documents, retain centred coordinates for routing, but use
the uncentred query for scoring. The resulting full score differs from the raw
dot product only by a query-dependent constant. Ten BEIR/MIRACL cells confirm
the identity and show that the earlier query-centring mismatch, not truncation
alone, caused a substantial part of the reported baseline loss.

On security, cell keys do compartmentalise diffuse known-pair evidence, but
the effect is gradual. Minimum-norm alignment recovers the anchor-span
projection and exceeds $0.9$ residual-gallery R@1 with about $32$--$36$
in-cell pairs while the full orthogonal key is still underdetermined. The
global disclosure budget grows as those pairs are spread over more cells,
yet there is no hard $d_{\mathrm{priv}}$ de-anonymisation threshold and no
corresponding secure choice of $C$.

The strengthened GTR outcome test gives the same conclusion in generated
text. With the exact PCA frame disclosed but no cell-key anchors, mean
off-the-shelf token-F1 falls from $.665$ to $.242$. The public prefix alone
retains $.433$, however, and eight-pair OLS reaches $.450$. This is a bounded
no-anchor benefit, not resistance to targeted or trained reconstruction.

Re-keying also fails the stronger unlinkability game. Residual norms,
within-cell Gram signatures and the unchanged public prefix link almost every
record across independently permuted releases. Insert/delete churn weakens
Gram matching when overlap is low, but not the stable prefix or norm;
micro-keys remove shared residual edges without closing those channels. The
formal Gaussian baseline also shows why a different guarantee cannot be
summarised by one empirical noise knob: useful retrieval returns only where
clean-reference linkability is already high in the evaluated grid.

The implementation result is similarly specific. Real block-SIMD CKKS keeps
the score order and sharply reduces query upload, but the current TenSEAL
layout is slower because it emits one large encrypted response per candidate.
SHARD should therefore be understood as a rank-preserving
alignment-compartmentalisation mechanism with explicit leakage and a measured
cryptographic query path, not as an unlinkable, cancellable, DP or
cryptographically private document store. Breaking the observed invariants
and packing the response path are the next construction and systems
requirements before deployment-oriented privacy claims are warranted.

\section*{Data and code availability}

The standalone SHARD repository at
\url{https://github.com/sergkurilenko/shard-private-dense-retrieval}
contains the implementation, experiment configurations, curated
machine-readable outputs, run metadata, figure-generation code, and the
canonical and JISA manuscript sources. The SVD/rotation/PQ comparison
baseline is maintained separately at
\url{https://github.com/sergkurilenko/hybrid-privacy-aware-semantic-search};
this repository retains only the three static baseline figures needed by the
comparison in this article.

Large embedding caches and third-party corpora are not redistributed. The data
manifest records the source datasets, model revisions, deterministic sampling
rules, cache formats, and commands needed to rebuild them. For Experiment~29,
the public artifact includes configurations, sample indices, content hashes,
numeric case aggregates, bootstrap summaries, and runtime metadata, while raw
AG News text and decoded case-level text are deliberately omitted. Synthetic
PII controls and aggregate outcomes remain available for auditing the stated
scope.

\appendix
\section{Reproduced baseline analyses from the global-linear system}\label{app:baseline}\label{ssec:vec2text}\label{app:projbaselines}\label{app:noise}\label{ssec:adaptive-inversion}

The security analyses of the global-linear baseline are the primary
contribution of our prior work and are reported there in
full~\cite{ref_alpha}; we do not reproduce them here and summarise
only the conclusions the main text relies on. \emph{Off-the-shelf
\textsc{Vec2Text}} against SVD-truncated and rotated embeddings: an unknown
rotation drops inversion BLEU to the noise floor, but \emph{only} against a
non-adaptive inverter---a known rotation, or the known-plaintext alignment
of Section~\ref{ssec:alignment-pq}, removes the protection.
\emph{Lightweight projection baselines} (Gaussian and random-orthogonal):
the choice of projection is an encoder-dependent utility trade-off, not a
universal privacy primitive. \emph{Matched-distortion Gaussian diagnostic
(not a DP mechanism)}:
matched-distortion noise can preserve more raw-neighbour structure than SVD
at the same $\sigma_{rec}$, so $\sigma_{rec}$ is not a privacy metric.
\emph{Aligned \textsc{Vec2Text} stress test} (Procrustes alignment then a
generative corrector): no exact text or typed PII is recovered---but neither
is it from the raw control, so this test bounds rather than certifies
resistance. These are exactly the global-key weaknesses that motivate
\textsc{Shard}.

\end{document}